\newtheorem{theorem}{Theorem}
\newtheorem{proposition}{Proposition}
\title{MASH: Evading Black-Box AI-Generated Text Detectors \\via Style Humanization}
\author{
    Yongtong Gu\textsuperscript{1}, 
    Songze Li\textsuperscript{1}\thanks{Corresponding author.}, 
    Xia Hu\textsuperscript{2} \\
    \vspace{0.5em} 
    \textsuperscript{1}Southeast University \\[-0.2cm] 
    \textsuperscript{2}Shanghai Artificial Intelligence Laboratory \\
    \vspace{0.5em} 
    \texttt{\{yongtonggu, songzeli\}@seu.edu.cn}, \texttt{huxia@pjlab.org.cn}
}
\begin{document}
\maketitle
\begin{abstract}

The increasing misuse of AI-generated texts (AIGT) has motivated the rapid development of AIGT detection methods.
However, the reliability of these detectors remains fragile against adversarial evasions.
Existing attack strategies often rely on white-box assumptions or demand prohibitively high computational and interaction costs, rendering them ineffective under practical black-box scenarios.
In this paper, we propose Multi-stage Alignment for Style Humanization (MASH), a novel framework that evades black-box detectors based on style transfer.
MASH sequentially employs style-injection supervised fine-tuning, direct preference optimization, and inference-time refinement to shape the distributions of AI-generated texts to resemble those of human-written texts.
Experiments across 6 datasets and 5 detectors demonstrate the superior performance of MASH over 11 baseline evaders.
Specifically, MASH achieves an average Attack Success Rate (ASR) of 92\%, surpassing the strongest baselines by an average of 24\%, while maintaining superior linguistic quality.
Our code and data are publicly available at \url{https://github.com/githigher/MASH}.

\end{abstract}

\section{Introduction}

Large Language Models (LLMs), such as ChatGPT, Gemini, and Claude, have demonstrated remarkable capabilities across various domains, including social media \citep{macko2025multisocial}, creative writing \citep{baek2025researchagent}, and code generation \citep{li2025large}. 
However, the rapid proliferation of these capabilities has raised serious concerns across academia and industry over issues such as automated fake news, web content pollution, and academic plagiarism \citep{wang2024stumbling, zheng2025th}.
To maintain the authenticity of the information ecosystem, detecting AI-generated text (AIGT) has become a critical issue.
Recent studies \citep{russell-etal-2025-people, cheng2025beyond} indicate that existing AIGT detectors have achieved remarkable accuracy on standard benchmarks.

\begin{figure}[t]
\includegraphics[width=\columnwidth]{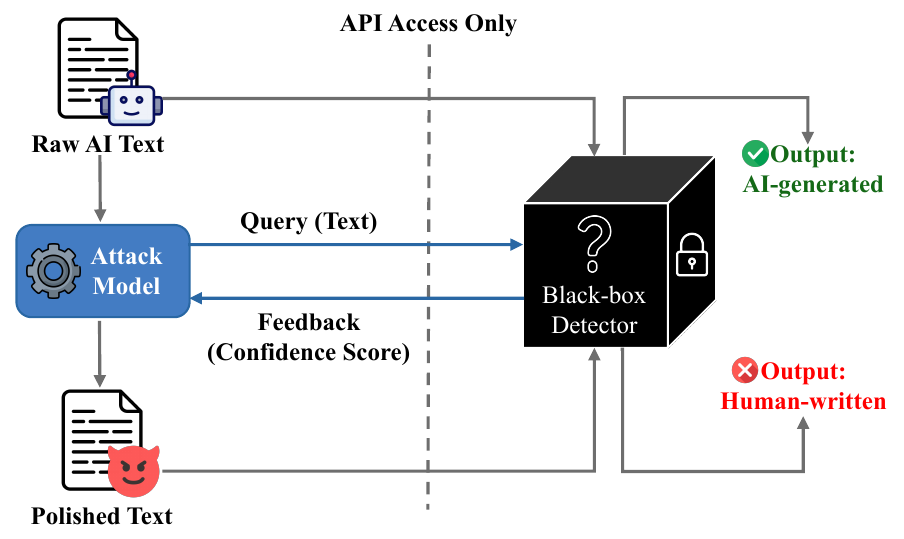}
\caption{Illustration of an evasion attack against AIGT detectors in a black-box setting.}
\label{fig:attack_illustration}
\end{figure}

Nevertheless, the robustness of these detectors remains fragile when facing malicious attacks. As illustrated in Figure \ref{fig:attack_illustration}, an attacker can rewrite the original generation to alter its stylistic pattern, thereby fooling the detector into misclassifying it as human-written.
However, existing attack strategies often rely on white-box assumptions, leading to limited Attack Success Rates (ASR) in black-box settings. Furthermore, these methods face significant obstacles in real-world deployment, where detectors typically operate as black boxes with strict access limits. Major constraints include: (1) excessive query interactions during inference \citep{Zhou2024_COLING, Abad2024Charmer}; (2) high computational burdens \citep{fang-etal-2025-language}; and (3) unrealistic requirements for detector gradients \citep{meng2025gradescape}.


To address these challenges, we propose Multi-stage Alignment for Style Humanization (MASH), a novel black-box style transfer framework for detector evasion.
Current detectors identify AI-generated text by leveraging inherent distribution shifts in semantic \citep{huang-etal-2024-ai} and statistical features \citep{hans2024spotting}.
Consequently, we posit that effective evasion necessitates humanizing AI-specific features through style transfer. MASH trains a paraphraser via a sequential pipeline: First, it employs Style-injection Supervised Fine-Tuning (Style-SFT) on open-source corpora, where the model learns by explicitly extracting style vectors from AI-generated and human-written text, respectively. Building on this foundation, the paraphraser is further fine-tuned using Direct Preference Optimization (DPO) \citep{rafailov2023direct} to adapt to the detector's decision boundaries, thereby significantly boosting the ASR. Finally, adversarial refinement is applied during inference to guarantee the linguistic quality of the output.

In summary, our contributions are as follows:

\begin{itemize}
    \item We propose MASH, a multi-stage black-box attack framework that sequentially integrates Style-SFT, DPO alignment, and inference-time refinement to achieve effective evasion.
    \item We demonstrate that a 0.1B parameter model, optimized via MASH, can surpass much larger models in evasion performance while requiring only limited interactions.
    \item Systematic evaluations across six domains confirm that MASH significantly outperforms state-of-the-art baselines, achieving superior attack success rates against both open-source and commercial detectors while maintaining superior linguistic quality.
\end{itemize}

\section{Related Work}

\subsection{AI-generated Text Detection}

\noindent \textbf{Training-based Detectors.} These methods typically formulate detection as a binary classification problem. Recent studies have incorporated adversarial training \citep{mao2024raidar}, Longformer architectures \citep{li2024mage}, contrastive learning \citep{guo2024detective}, and denoising reconstruction mechanisms \citep{huang-etal-2024-ai}.
Furthermore, such supervised approaches typically require extensive manual annotations.

\noindent \textbf{Zero-shot Detectors.} These methods eliminate the need for training by leveraging inherent statistical features. Representative approaches include GLTR \citep{gehrmann-etal-2019-gltr}, which utilizes token ranking, and DetectGPT \citep{mitchell2023detectgpt}, which examines probability curve curvature. Other works include Fast-DetectGPT \citep{bao2023fast} and Binoculars \citep{hans2024spotting}; the latter identifies machine text through ratio relationships between perplexity and cross-perplexity.
This study focuses only on training-based and zero-shot detectors due to their practical prevalence in deployment.


\subsection{Detection Evasion Methods}

\noindent \textbf{Perturbation-based Attacks.} These methods disrupt detectors through character or word-level modifications. Early works like DeepWordBug \citep{gao2018black} and TextFooler \citep{jin2020bert} utilize spelling errors or synonym substitutions. Recently, Charmer \citep{Abad2024Charmer} revitalizes this direction by optimizing character-level noise distributions. 
However, their ASR remains suboptimal when the budget for detector interactions is strictly constrained.

\noindent \textbf{Prompt-based Attacks.} These methods use contextual prompts to guide LLMs in generating evasion-prone text. SICO \citep{lu2024large} employs optimization to find detection-minimizing prompts, while PromptAttack \citep{xu2024an} uses prompts to mimic human styles. Despite producing high-quality text, their efficacy depends on the model's instruction-following consistency and can be unstable against robust classifiers.

\noindent \textbf{Paraphrase-based Attacks.} These methods utilize specialized models to rephrase machine-generated text. DIPPER \citep{krishna2023paraphrasing} introduces a T5-XXL paraphraser that enables control over lexical and syntactic parameters. Subsequently, \citet{sadasivan2023can} proposed using recursive paraphrasing. 
Recent works leverage DPO for evasion, using detector confidence as a reward \citep{nicks2023language, wang2025humanizing}. Building on this, \citet{pedrotti-etal-2025-stress} fine-tune models on diverse parallel data. However, these methods often necessitate white-box access to the source generator. In contrast, MASH is a black-box paradigm that humanizes text from any LLM without internal access to either the source generator or the target detector.

\subsection{Text Style Transfer}


Text style transfer (TST) aims to modify linguistic attributes while strictly preserving semantic fidelity, such as sentiment or formality, via learning disentangled latent representations with designated semantics \citep{hu2017toward, shen2017style}. Recently, LLMs have shifted this paradigm toward in-context learning and parameter-efficient fine-tuning \citep{reif2022recipe}. 
Notably, recent studies reveal the intrinsic stylistic divergence between AI-generated and human-written text \citep{liu2023chatgpt, jiang2025sendetex}.
In this work, we extend TST to stylistic humanization. redefining detector evasion as a specialized ``machine-to-human'' style transfer task.

\section{Methodology}

\subsection{Problem Formulation}

The task of AI-generated text detection is formulated as a binary classification problem. Given a sequence $\mathbf{x} = (x_1, \dots, x_T)$ generated by an LLM via $P_{LLM}(\mathbf{x}) = \prod_{t=1}^{T} P(x_t | x_{<t})$, a detector $D$ assigns a score $D(\mathbf{x}) \in [0, 1]$ representing the probability of AI. The final decision $\hat{y}$ is:
\begin{equation}
\hat{y} = \mathbbm{1}(D(\mathbf{x}) > \tau),
\end{equation}
where $\mathbbm{1}(\cdot)$ is the indicator function, $\tau$ is the threshold, and $\hat{y} \in \{1, 0\}$ denotes AI-generated and human-written text, respectively.

\noindent \textbf{Threat Model.} We consider a black-box setting where the attacker has no knowledge of the detector's architecture, parameters, or gradients. Access is restricted to an Oracle providing scores $D(\mathbf{x})$ or labels $\hat{y}$. The objective is to learn a generator $G_\phi$ that transforms source text $\mathbf{x}_{ai}$ into an adversarial example $\mathbf{x}_{adv}$ to evade detection:
\begin{equation}
\begin{aligned}
& \text{minimize} \quad \mathbb{E}_{\mathbf{x}_{ai} \sim P_{LLM}} \left[ D(G_\phi(\mathbf{x}_{ai})) \right], \\
& \text{subject to} \quad \mathcal{S}(\mathbf{x}_{adv}, \mathbf{x}_{ai}) \ge \epsilon, \quad \mathcal{Q}(\mathbf{x}_{adv}) \ge \delta,
\end{aligned}
\end{equation}
where $\mathcal{S}$ and $\mathcal{Q}$ denote semantic consistency and fluency with thresholds $\epsilon$ and $\delta$, respectively.

\begin{figure*}[t]
  \centering  
  \includegraphics[width=1\linewidth]{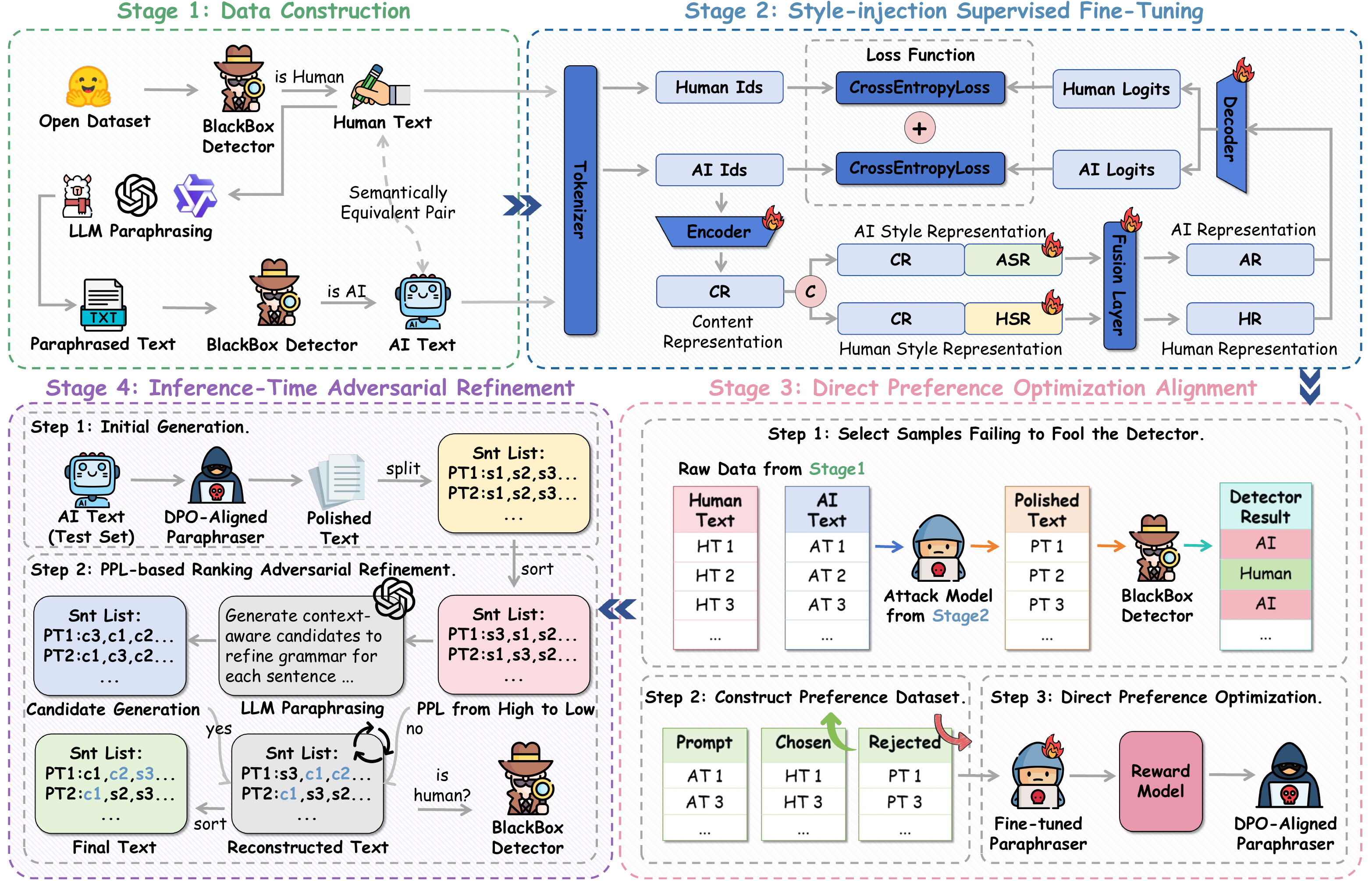} 
  \caption{Overview of the proposed MASH framework. It comprises four stages: (1) Data Construction to synthesize parallel data; (2) Style-Injection SFT for supervised initialization; (3) DPO Alignment to optimize against detector boundaries; and (4) Inference-Time Adversarial Refinement to guarantee final text quality.}
  \label{fig:mash} 
\end{figure*}

To achieve the adversarial objective in a black-box setting, we develop the Multi-stage Alignment for Style Humanization (MASH) framework.
Our key idea is to reformulate detector evasion as a style transfer task.
As illustrated in Figure~\ref{fig:mash}, MASH bridges the stylistic gap between AI and human texts through four stages. We first synthesize parallel data via Inverse Data Construction, enabling Style-SFT to initialize the paraphraser with human writing patterns. Subsequently, DPO Alignment optimizes the paraphraser against detector boundaries, culminating in Inference-Time Adversarial Refinement to ensure linguistic quality.

\subsection{MASH Methodology}
\label{sec:mash_methodology}

\noindent \textbf{Stage 1: Data Construction.}
Text style transfer is often limited by the scarcity of parallel training data. Our key insight is that while converting AI-generated text to human style is challenging, the reverse direction is easy: LLMs can readily rewrite human-written text into machine style. Moreover, human-written text is abundantly available from open-source datasets.
First, we collect raw texts from open-source datasets and retain only high-confidence human-written samples $\mathbf{x}_{human}$ (where $D(\mathbf{x}_{human}) < \tau$).
Next, for each human sample $\mathbf{x}_{human}$, we use an LLM to generate a semantically equivalent paraphrase $\mathbf{x}_{ai}$. These are further filtered to ensure strong machine style ($D(\mathbf{x}_{ai}) > \tau$). 
Consequently, we obtain a dataset consisting of $N$ AI-human pairs $\mathcal{D}_{pair} = \{(\mathbf{x}_{ai}^{(i)}, \mathbf{x}_{human}^{(i)})\}_{i=1}^N$.

\noindent \textbf{Stage 2: Style-injection Supervised Fine-Tuning.}
Direct fine-tuning often leads to overfitting and semantic loss. To achieve controllable humanization, we propose a style-injection architecture based on pre-trained BART. We decouple style by introducing trainable embeddings $\mathbf{s}_{ai}, \mathbf{s}_{human} \in \mathbb{R}^d$. Given input $\mathbf{x}_{ai}$, the encoder produces content representation $\mathbf{H}_{content}$. A fusion layer then injects the selected style $\mathbf{s}_{style}$ via linear projection:
\begin{equation}
\mathbf{H}_{fused}^{(t)} = \mathbf{W}_p \cdot [\mathbf{h}_{content}^{(t)} ; \mathbf{s}_{style}] + \mathbf{b}_p,
\end{equation}
where $[\cdot ; \cdot]$ denotes the concatenation, and $\mathbf{W}_p$ and $\mathbf{b}_p$ represent the trainable weight matrix and bias vector of the projection layer, respectively.





The decoder generates the target sequence conditioned on the fused representation $\mathbf{H}_{fused}$. To balance style injection with semantic preservation, we employ a multi-task objective:
\begin{equation}
\mathcal{L}_{recon} = -\sum \log P_{\theta}(\mathbf{x}_{ai} | \mathbf{x}_{ai}, \mathbf{s}_{ai}),
\end{equation}
\begin{equation}
\mathcal{L}_{trans} = -\sum \log P_{\theta}(\mathbf{x}_{human} | \mathbf{x}_{ai}, \mathbf{s}_{human}),
\end{equation}
where $\mathcal{L}_{recon}$ preserves semantic consistency by reconstructing the original AI-generated text, while $\mathcal{L}_{trans}$ imparts human stylistic patterns by generating the target $\mathbf{x}_{human}$ conditioned on $\mathbf{s}_{human}$.
We minimize $\mathcal{L}_{SFT} = \lambda \mathcal{L}_{recon} + (1 - \lambda) \mathcal{L}_{trans}$ to inject human style while preserving semantics, ensuring robust initialization for alignment.

\noindent \textbf{Stage 3: DPO Alignment.}
In the context of style transfer, SFT learns what human style looks like, but not how far the output has moved from AI style. To bridge this gap, we employ DPO, which uses the detector's confidence as an implicit reward signal to directly optimize for crossing the style boundary.
Formally, we posit that the human preference follows a Bradley-Terry model \citep{bradley1952rank} driven by an implicit reward function $r(x,y)$. To align the optimization objective with our evasion goal, we define the reward as a scaled inversion of the detector's confidence score $D(y)$:
\begin{equation}
    r(x,y) = -C \cdot D(y), \quad C > 0.
\label{eq:reward_def}
\end{equation}

As demonstrated in Appendix~\ref{sec:appendix_alignment}, maximizing this reward forces the optimal policy $\pi^*$ to converge towards regions where $D(y) \to 0$.

\textbf{Hard Negative Mining.} 
We construct the preference dataset by repurposing the parallel data $\mathcal{D}_{pair}$ from Stage 1, eliminating the need for external data.
For each source input $\mathbf{x}_{ai}$, the ground-truth human text serves as the chosen response $\mathbf{y}_w$.
The rejected response $\mathbf{y}_l$ is sampled from the Stage 2 model $\pi_{\text{SFT}}(\cdot| \mathbf{x}_{ai}, \mathbf{s}_{human})$ and retained only if it fails to evade the detector (i.e., $D(\mathbf{y}_{l}) > \tau$).
As justified in Appendix~\ref{sec:appendix_hard_negative}, this selection criterion maximizes the probabilistic gap between pairs, preventing vanishing gradients and ensuring robust optimization.
The resulting dataset is formulated as:
\begin{equation}
\mathcal{D}_{\text{DPO}} = \left\{ (\mathbf{x}^{(i)}, \mathbf{y}_w^{(i)}, \mathbf{y}_l^{(i)}) \mid D(\mathbf{y}_{l}^{(i)}) > \tau \right\}_{i=1}^N.
\end{equation}


\textbf{Fine-Tuning with DPO.}
The theoretical justification for using Style-SFT to ensure support coverage is detailed in Appendix~\ref{sec:appendix_initialization}.
In practice, we bypass explicit reward modeling and directly optimize the policy.
Following \citet{nicks2023language}, we define the implicit log-ratio $h_\theta(\mathbf{y}|\mathbf{x}) = \beta \log \frac{\pi_\theta(\mathbf{y}|\mathbf{x})}{\pi_{\text{ref}}(\mathbf{y}|\mathbf{x})}$, yielding the adversarial loss:
\begin{equation}
\label{eq:dpo_loss}
\mathcal{L}_{\text{DPO}} = -\mathbb{E}_{\mathcal{D}_{\text{DPO}}} \left[ \log \sigma \left( h_\theta(\mathbf{y}_w|\mathbf{x}) - h_\theta(\mathbf{y}_l|\mathbf{x}) \right) \right].
\end{equation}

Minimizing this loss functions as implicit adversarial training, pushing the generation distribution from the AI region toward the human region based on the detector's judgments.




\noindent \textbf{Stage 4: Inference-Time Adversarial Refinement.}
We introduce a refinement module that enhances text quality while strictly preserving the detector's human-written prediction.

\textbf{Candidate Generation.}
We decompose the DPO-aligned text $\hat{T}$ into a sentence sequence $S = \{s_1, \dots, s_n\}$. For each sentence $s_i$, an LLM polisher $\mathcal{G}$ generates a candidate $c_i$ conditioned on the original machine-generated content $\mathcal{C}_{ref}$ and a polishing instruction $I$, formulated as $c_i = \mathcal{G}(s_i \mid \mathcal{C}_{ref}, I)$.

\textbf{PPL-based Ranking Adversarial Refinement.}
To optimize query efficiency, we rank sentences by perplexity in descending order, prioritizing low-fluency sentences.
We apply a greedy strategy where $s_i$ is replaced by $c_i$ only if the detector $D$ maintains a human prediction. This enhances fluency without compromising attack success.

\section{Experiments}

\subsection{Experimental Setup}

\textbf{Datasets and Detectors.} To ensure a comprehensive evaluation \citep{zheng2025th}, we evaluate MASH on six domains from two benchmarks: MGTBench (Essay, Reuters, WP) \citep{he2024mgtbench} and MGT-Academic (STEM, Social, Humanity) \citep{liu2025generalization}. 
We test against three open-source detectors (RoBERTa, Binoculars, SCRN) and two commercial APIs (Writer, Scribbr).

\noindent \textbf{Baselines and Metrics.}
We benchmark against 11 state-of-the-art methods: five perturbation-based attacks (DeepWordBug~\citep{gao2018black}, TextBugger~\citep{li2018textbugger}, TextFooler~\citep{jin2020bert}, Charmer~\citep{Abad2024Charmer}, HMGC~\citep{Zhou2024_COLING}), one prompt-based attack (PromptAttack~\citep{xu2024an}), and five paraphrase-based attacks (DIPPER~\citep{krishna2023paraphrasing}, DPO-Evader~\citep{nicks2023language}, Toblend~\citep{huang2024toblend}, CoPA~\citep{fang-etal-2025-language}, GradEscape~\citep{meng2025gradescape}).

Following \citet{wang2025humanizing, meng2025gradescape}, we evaluate evasion performance via Attack Success Rate (ASR) and text quality via Perplexity (PPL), BERTScore \citep{zhang2019bertscore}, and GRUEN \citep{zhu2020gruen}.

\noindent \textbf{Implementation Details.}
MASH is initialized with BART-base \citep{lewis2020bart} and optimized via AdamW on a single NVIDIA RTX 3090 GPU.
Detailed experimental configurations are provided in Appendix~\ref{app:exp_details}.

\begin{table*}[t]
  \centering

  \resizebox{\textwidth}{!}{%
  \begin{tabular}{l|cccc|cccc|cccc}
    \toprule
    \multirow{2}{*}{\textbf{Method}} & \multicolumn{4}{c|}{\textbf{MGTBench-Essay}} & \multicolumn{4}{c|}{\textbf{MGTBench-Reuters}} & \multicolumn{4}{c}{\textbf{MGTBench-WP}} \\
    \cmidrule(lr){2-5} \cmidrule(lr){6-9} \cmidrule(lr){10-13}
     & \textbf{ASR} $\uparrow$ & \textbf{PPL} $\downarrow$ & \textbf{GRUEN} $\uparrow$ & \textbf{BS} $\uparrow$ & \textbf{ASR} $\uparrow$ & \textbf{PPL} $\downarrow$ & \textbf{GRUEN} $\uparrow$ & \textbf{BS} $\uparrow$ & \textbf{ASR} $\uparrow$ & \textbf{PPL} $\downarrow$ & \textbf{GRUEN} $\uparrow$ & \textbf{BS} $\uparrow$ \\
    \midrule
    DeepWordBug & 0.13 & 63.02 & 0.4703 & 0.9339 & 0.02 & 88.83 & 0.5240 & 0.9325 & 0.51 & 32.85 & 0.5238 & 0.9497 \\
    TextBugger & 0.47 & 40.67 & 0.4572 & 0.9188 & 0.17 & 84.02 & 0.4043 & 0.8929 & 0.77 & 34.24 & 0.4936 & 0.9318 \\
    TextFooler & 0.38 & 77.18 & 0.4465 & 0.9399 & 0.29 & 181.46 & 0.3018 & 0.8700 & 0.59 & 50.20 & 0.4249 & 0.9379 \\
    DIPPER & 0.07 & 44.78 & 0.6481 & 0.9034 & 0.00 & 11.07 & 0.7180 & 0.9096 & 0.04 & 14.11 & 0.5773 & 0.9006 \\
    Toblend & 0.00 & 7.03 & 0.3777 & 0.7525 & 0.00 & 6.88 & 0.5581 & 0.7242 & 0.19 & 6.59 & 0.4814 & 0.7312 \\
    PromptAttack & 0.00 & 11.21 & 0.7776 & 0.9155 & 0.00 & 10.55 & 0.7909 & 0.9269 & 0.00 & 10.55 & 0.7256 & 0.9246 \\
    Charmer & 0.45 & 38.00 & 0.5532 & 0.9664 & 0.05 & 114.57 & 0.5828 & 0.9624 & 0.62 & 17.26 & 0.5785 & 0.9759 \\ 
    DPO-Evader & 0.00 & 5.14 & 0.1539 & 0.9343 & 0.00 & 4.53 & 0.1492 & 0.9257 & 0.00 & 5.41 & 0.2037 & 0.9230 \\
    HMGC & 0.04 & 57.80 & 0.5883 & 0.9226 & 0.01 & 82.60 & 0.5469 & 0.9006 & 0.15 & 41.82 & 0.5763 & 0.9319 \\
    GradEscape & 0.22 & 17.72 & 0.1345 & 0.9038 & 0.02 & 74.96 & 0.4811 & 0.8141 & 0.00 & 13.49 & 0.6516 & 0.9865 \\
    CoPA & 0.01 & 29.42 & 0.6891 & 0.8750 & 0.00 & 41.64 & 0.6942 & 0.8769 & 0.17 & 25.47 & 0.6325 & 0.8807 \\
    \rowcolor{gray!20} \textbf{Ours} & \textbf{0.95} & 18.90 & 0.6614 & 0.9004 & \textbf{0.73} & 9.09 & 0.6790 & 0.9015 & \textbf{0.90} & 20.80 & 0.6471 & 0.8974 \\
    \midrule
    \multirow{2}{*}{\textbf{Method}} & \multicolumn{4}{c|}{\textbf{MGT-Academic-Humanity}} & \multicolumn{4}{c|}{\textbf{MGT-Academic-Social Science}} & \multicolumn{4}{c}{\textbf{MGT-Academic-STEM}} \\
    \cmidrule(lr){2-5} \cmidrule(lr){6-9} \cmidrule(lr){10-13}
     & \textbf{ASR} $\uparrow$ & \textbf{PPL} $\downarrow$ & \textbf{GRUEN} $\uparrow$ & \textbf{BS} $\uparrow$ & \textbf{ASR} $\uparrow$ & \textbf{PPL} $\downarrow$ & \textbf{GRUEN} $\uparrow$ & \textbf{BS} $\uparrow$ & \textbf{ASR} $\uparrow$ & \textbf{PPL} $\downarrow$ & \textbf{GRUEN} $\uparrow$ & \textbf{BS} $\uparrow$ \\
    \midrule
    DeepWordBug & 0.07 & 30.09 & 0.4867 & 0.9520 & 0.11 & 32.59 & 0.5411 & 0.9514 & 0.07 & 22.64 & 0.4662 & 0.9543 \\
    TextBugger & 0.15 & 40.51 & 0.4031 & 0.9319 & 0.11 & 56.87 & 0.4425 & 0.9364 & 0.06 & 39.93 & 0.4041 & 0.9333 \\
    TextFooler & 0.11 & 72.33 & 0.3495 & 0.9289 & 0.10 & 136.89 & 0.3315 & 0.9121 & 0.07 & 81.24 & 0.3189 & 0.9255 \\
    DIPPER & 0.57 & 14.67 & 0.5782 & 0.9031 & 0.51 & 13.88 & 0.6833 & 0.9061 & 0.55 & 12.47 & 0.6039 & 0.8789 \\
    PromptAttack & 0.00 & 12.46 & 0.6893 & 0.9277 & 0.00 & 11.61 & 0.6904 & 0.9295 & 0.01 & 10.39 & 0.6529 & 0.9340 \\
    Charmer & 0.73 & 14.54 & 0.5504 & 0.9711 & 0.84 & 12.88 & 0.6409 & 0.9736 & 0.29 & 11.62 & 0.5437 & 0.9874 \\ 
    DPO-Evader & 0.06 & 9.82 & 0.4409 & 0.9206 & 0.31 & 6.80 & 0.3529 & 0.9271 & 0.14 & 8.37 & 0.5404 & 0.9329 \\
    HMGC & 0.03 & 48.02 & 0.5181 & 0.9155 & 0.07 & 58.85 & 0.5705 & 0.9170 & 0.03 & 42.69 & 0.5283 & 0.9257 \\
    GradEscape & 0.37 & 21.51 & 0.6151 & 0.9584 & 0.52 & 14.69 & 0.7015 & 0.9666 & 0.38 & 20.40 & 0.6162 & 0.9499 \\
    CoPA & 0.20 & 27.52 & 0.6388 & 0.8727 & 0.19 & 29.95 & 0.6346 & 0.8740 & 0.16 & 23.75 & 0.6446 & 0.8698 \\
    \rowcolor{gray!20} \textbf{Ours} & \textbf{0.87} & 20.67 & 0.6396 & 0.9048 & \textbf{0.98} & 17.54 & 0.7294 & 0.8993 & \textbf{1.00} & 20.08 & 0.6431 & 0.8194 \\
    \bottomrule
  \end{tabular}%
  }
  \caption{Evasion performance comparison against the fine-tuned RoBERTa detector (detection threshold $\tau = 0.5$).}
  \label{tab:main_results}
\end{table*}

\begin{table*}[t]
  \centering

  \resizebox{\textwidth}{!}{%
  \begin{tabular}{l|cccc|cccc|cccc}
    \toprule
    \multirow{2}{*}{\textbf{Method}} & \multicolumn{4}{c|}{\textbf{MGTBench-Essay}} & \multicolumn{4}{c|}{\textbf{MGTBench-Reuters}} & \multicolumn{4}{c}{\textbf{MGTBench-WP}} \\
    \cmidrule(lr){2-5} \cmidrule(lr){6-9} \cmidrule(lr){10-13}
     & \textbf{ASR} $\uparrow$ & \textbf{PPL} $\downarrow$ & \textbf{GRUEN} $\uparrow$ & \textbf{BS} $\uparrow$ & \textbf{ASR} $\uparrow$ & \textbf{PPL} $\downarrow$ & \textbf{GRUEN} $\uparrow$ & \textbf{BS} $\uparrow$ & \textbf{ASR} $\uparrow$ & \textbf{PPL} $\downarrow$ & \textbf{GRUEN} $\uparrow$ & \textbf{BS} $\uparrow$ \\
    \midrule
    DeepWordBug & 0.00 & 11.46 & 0.5472 & 0.9767 & 0.00 & 11.42 & 0.6739 & 0.9796 & 0.00 & 12.40 & 0.6179 & 0.9823 \\
    TextBugger & 0.00 & 12.61 & 0.5362 & 0.9644 & 0.00 & 13.65 & 0.6360 & 0.9578 & 0.00 & 14.43 & 0.6109 & 0.9706 \\
    TextFooler & 0.00 & 12.86 & 0.5743 & 0.9781 & 0.00 & 13.64 & 0.6710 & 0.9746 & 0.00 & 15.09 & 0.6027 & 0.9760 \\
    DIPPER & 0.23 & 9.97 & 0.6372 & 0.9041 & 0.16 & 11.08 & 0.7442 & 0.9006 & 0.33 & 12.63 & 0.6679 & 0.9086 \\
    Toblend & 0.04 & 16.44 & 0.3931 & 0.7583 & 0.21 & 13.34 & 0.5293 & 0.7302 & 0.22 & 13.22 & 0.5303 & 0.7372 \\
    PromptAttack & 0.53 & 10.00 & 0.7567 & 0.9169 & 0.55 & 9.98 & 0.8083 & 0.9327 & 0.35 & 10.30 & 0.6881 & 0.9264 \\
    Charmer & 0.50 & 11.93 & 0.5505 & 0.9846 & 0.63 & 12.29 & 0.6677 & 0.9820 & 0.52 & 14.95 & 0.6233 & 0.9838 \\
    DPO-Evader & 0.03 & 4.20 & 0.1305 & 0.9248 & 0.03 & 3.81 & 0.1641 & 0.9233 & 0.02 & 4.59 & 0.1591 & 0.9198 \\
    HMGC & 0.32 & 13.67 & 0.5545 & 0.9781 & 0.41 & 13.00 & 0.6428 & 0.9779 & 0.31 & 14.26 & 0.6208 & 0.9773 \\
    GradEscape & 0.64 & 14.45 & 0.6446 & 0.9558 & 0.62 & 13.27 & 0.7200 & 0.9594 & 0.37 & 12.29 & 0.6745 & 0.9683 \\
    CoPA & 0.11 & 8.95 & 0.2516 & 0.9165 & 0.05 & 9.24 & 0.4468 & 0.9164 & 0.16 & 10.08 & 0.3921 & 0.9156 \\
    \rowcolor{gray!20} \textbf{Ours} & \textbf{0.94} & 11.39 & 0.6755 & 0.9077 & \textbf{0.95} & 12.91 & 0.7122 & 0.9037 & \textbf{0.85} & 12.05 & 0.6883 & 0.9286 \\
    \midrule
    \multirow{2}{*}{\textbf{Method}} & \multicolumn{4}{c|}{\textbf{MGT-Academic-Humanity}} & \multicolumn{4}{c|}{\textbf{MGT-Academic-Social Science}} & \multicolumn{4}{c}{\textbf{MGT-Academic-STEM}} \\
    \cmidrule(lr){2-5} \cmidrule(lr){6-9} \cmidrule(lr){10-13}
     & \textbf{ASR} $\uparrow$ & \textbf{PPL} $\downarrow$ & \textbf{GRUEN} $\uparrow$ & \textbf{BS} $\uparrow$ & \textbf{ASR} $\uparrow$ & \textbf{PPL} $\downarrow$ & \textbf{GRUEN} $\uparrow$ & \textbf{BS} $\uparrow$ & \textbf{ASR} $\uparrow$ & \textbf{PPL} $\downarrow$ & \textbf{GRUEN} $\uparrow$ & \textbf{BS} $\uparrow$ \\
    \midrule
    DeepWordBug & 0.00 & 10.73 & 0.6387 & 0.9822 & 0.00 & 10.85 & 0.6050 & 0.9828 & 0.00 & 8.43 & 0.5746 & 0.9871 \\
    TextBugger & 0.00 & 11.81 & 0.6173 & 0.9675 & 0.00 & 13.39 & 0.5854 & 0.9628 & 0.00 & 10.42 & 0.5485 & 0.9680 \\
    TextFooler & 0.00 & 11.37 & 0.6369 & 0.9813 & 0.00 & 13.38 & 0.6143 & 0.9794 & 0.00 & 9.92 & 0.5776 & 0.9840 \\
    DIPPER & 0.53 & 11.93 & 0.6378 & 0.8917 & 0.38 & 11.63 & 0.6407 & 0.8891 & 0.49 & 10.86 & 0.6429 & 0.9048 \\
    PromptAttack & 0.83 & 11.13 & 0.7382 & 0.9270 & 0.86 & 10.67 & 0.7320 & 0.9276 & 0.81 & 9.33 & 0.6904 & 0.9318 \\
    Charmer & 0.71 & 12.91 & 0.6304 & 0.9831 & 0.77 & 12.37 & 0.5960 & 0.9832 & 0.71 & 10.16 & 0.5592 & 0.9841 \\
    DPO-Evader & 0.08 & 5.79 & 0.2598 & 0.9502 & 0.06 & 5.17 & 0.1590 & 0.9440 & 0.04 & 5.02 & 0.1795 & 0.9525 \\
    HMGC & 0.69 & 15.05 & 0.6024 & 0.9732 & 0.68 & 19.14 & 0.5847 & 0.9715 & 0.73 & 15.25 & 0.5384 & 0.9708 \\
    GradEscape & 0.87 & 17.58 & 0.7000 & 0.9416 & 0.88 & 16.86 & 0.6649 & 0.9422 & 0.95 & 20.92 & 0.6546 & 0.9286 \\
    CoPA & 0.25 & 9.25 & 0.3889 & 0.9273 & 0.17 & 9.52 & 0.3438 & 0.9258 & 0.23 & 7.85 & 0.3377 & 0.9328 \\
    \rowcolor{gray!20} \textbf{Ours} & \textbf{0.95} & 13.47 & 0.7025 & 0.9074 & \textbf{0.95} & 13.31 & 0.7014 & 0.9073 & \textbf{0.99} & 14.02 & 0.6434 & 0.8864 \\
    \bottomrule
  \end{tabular}%
  }
    \caption{Evasion performance comparison against the Binoculars detector.}
      \label{tab:main_results_bin}
\end{table*}

\begin{table}[t]
\centering
\small

\resizebox{\columnwidth}{!}{
\begin{tabular}{llcccccc}
\toprule
\multirow{2}{*}{\textbf{Detector}} & \multirow{2}{*}{\textbf{Dataset}} & \multirow{2}{*}{\textbf{ASR}} & \multirow{2}{*}{\textbf{BS}} & \multicolumn{2}{c}{\textbf{No Attack}} & \multicolumn{2}{c}{\textbf{With Attack}} \\
\cmidrule(lr){5-6} \cmidrule(lr){7-8}
& & & & \textbf{PPL} & \textbf{GRUEN} & \textbf{PPL} & \textbf{GRUEN} \\
\midrule
\multirow{6}{*}{Scribbr} 
& Essay    & 0.95 & 0.8800 & 7.33 & 0.5903 & 16.89 & 0.6599 \\
& Reuters  & 0.59 & 0.8740 & 8.87 & 0.6964 & 8.54  & 0.7318 \\
& WP       & 0.90 & 0.8844 & 8.18 & 0.7133 & 17.18 & 0.6302 \\
& Humanity & 0.94 & 0.9288 & 10.82& 0.5899 & 19.61 & 0.6712 \\
& Social   & 0.98 & 0.8730 & 9.50 & 0.6714 & 18.53 & 0.7333 \\
& STEM     & 1.00 & 0.7840 & 8.61 & 0.5690 & 20.29 & 0.6442 \\
\midrule
\multirow{6}{*}{Writer} 
& Essay    & 0.90 & 0.8722 & 7.41 & 0.5727 & 15.87 & 0.6806 \\
& Reuters  & 0.66 & 0.8624 & 7.91 & 0.6440 & 9.16  & 0.7440 \\
& WP       & 0.91 & 0.8947 & 8.79 & 0.6524 & 15.78 & 0.6944 \\
& Humanity & 0.86 & 0.9067 & 8.39 & 0.6511 & 20.32 & 0.6246 \\
& Social   & 0.96 & 0.9062 & 7.60 & 0.5983 & 16.42 & 0.7092 \\
& STEM     & 0.99 & 0.7832 & 7.91 & 0.6102 & 18.29 & 0.6741 \\
\bottomrule
\end{tabular}
}

\caption{Commercial detector evasion.}
\label{tab:commercial_attack_results}
\end{table}

\subsection{Performance Evaluation}

\textbf{Attack Effectiveness.}
As shown in Table~\ref{tab:main_results}, MASH outperforms the strongest baselines by an average of 29.7\% against the fine-tuned RoBERTa detector, while maintaining superior text quality.
The limitations of perturbation-based attacks are illustrated in Figure~\ref{fig:short_and_long} of Appendix~\ref{sec:appendix_additional_results}. Paraphrase-based methods also perform poorly.
Among existing paraphrase-based baselines, DIPPER struggles with out-of-distribution (OOD) generalization, while CoPA's indirect proxy-guidance fails to capture target-specific decision boundaries.

We further evaluate MASH against zero-shot detectors. As detailed in Table~\ref{tab:main_results_bin}, MASH surpasses the strongest baselines by an average of 19.0\%.
This suggests that transferring to human style naturally corrects the statistical patterns that zero-shot detectors rely on.


For commercial detectors, we assess the detection capabilities of Writer and Scribbr, which exhibit high precision across all six domains (see Table~\ref{tab:commercial_performance} in Appendix~\ref{sec:appendix_additional_results}), confirming their robustness. As shown in Table~\ref{tab:commercial_attack_results}, MASH evades these detectors with an average ASR of 89\%, while preserving semantic consistency. These results confirm that MASH generalizes effectively to real-world commercial systems under black-box settings.

\begin{figure}[t]
\includegraphics[width=\columnwidth]{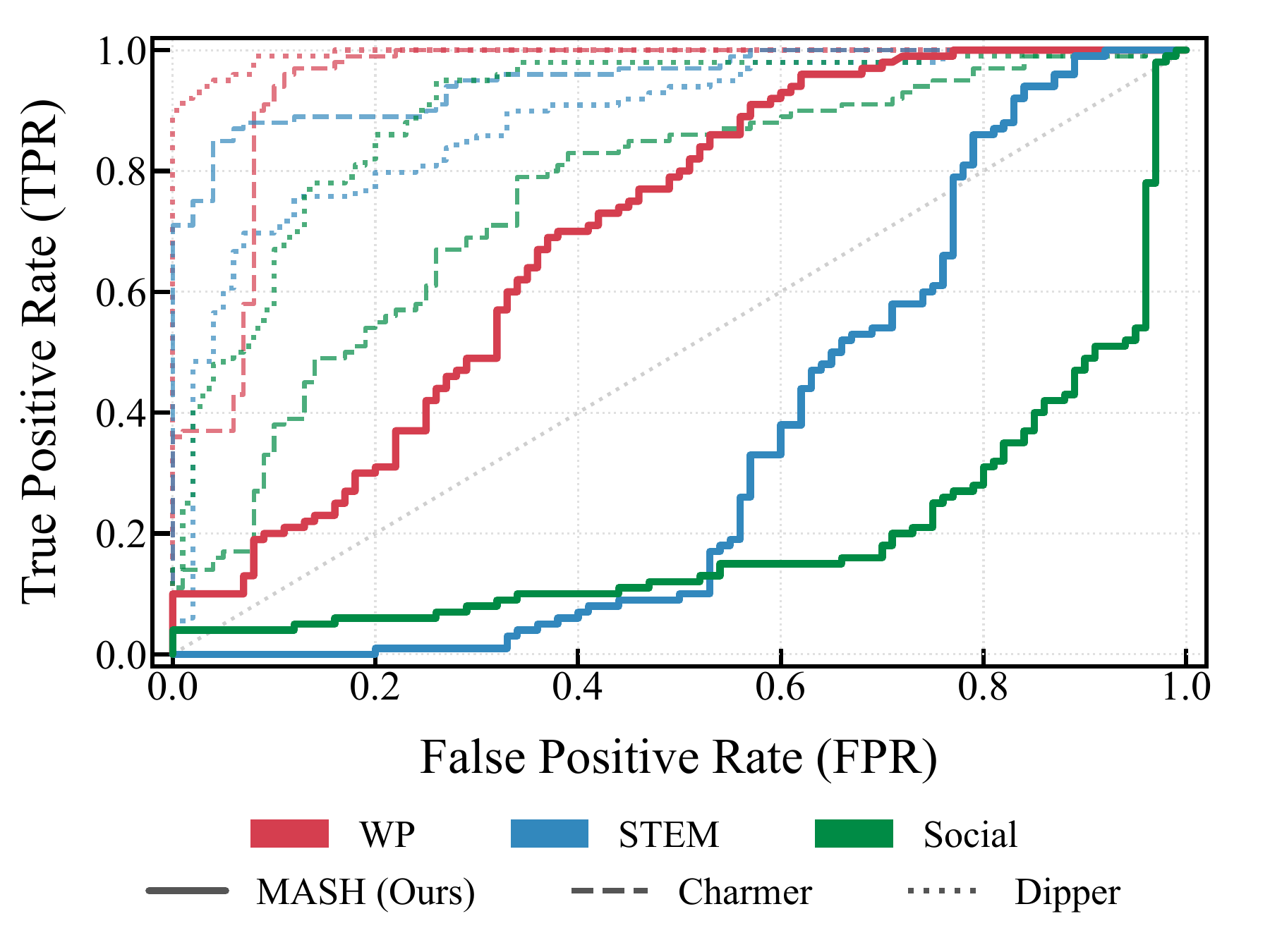}
\caption{ROC curves comparing the evasion effectiveness of MASH against baseline attacks on the RoBERTa detector across the WP, STEM, and Social domains. Note that a lower TPR corresponds to a higher ASR.}
\label{fig:roc}
\end{figure}

\noindent \textbf{Interpretation of AUROC.}
In real-world detection scenarios, the utility of a detector hinges on its capacity to maintain a high True Positive Rate (TPR) while rigorously suppressing the False Positive Rate (FPR) to prevent wrongful accusations against human authors. To evaluate this, we benchmarked our method against two highly efficient baselines, Charmer and DIPPER. Figure~\ref{fig:roc} illustrates the ROC curves under these three distinct attack modalities. Quantitatively, at a stringent 1\% FPR threshold, MASH suppresses the TPR to near-zero levels, particularly in the Social and STEM domains. This confirms that MASH effectively neutralizes the detector's discriminative capability, even under conservative operational constraints.


\begin{table}[t]
\centering
\small

\resizebox{\columnwidth}{!}{
\begin{tabular}{lccc}
\toprule
\textbf{Method} & \textbf{GPU Memory} & \textbf{Inference Time} & \textbf{Query Cost} \\
\midrule
HMGC & 41046 & 86 & 0 \\
Charmer & 30094 & 189.9 & 18827 \\
DPO\_Evader & 22699 & 0.2 & 0 \\
DIPPER & 22241 & 58.6 & 0 \\
CoPA & 15425 & 6.7 & 0 \\
ToBlend & 9223 & 31.8 & 0 \\
TextBugger & 3753 & 78.3 & 5275.7 \\
TextFooler & 3753 & 40.6 & 2913.8 \\
GradEscape & 3133 & 1.7 & 0 \\
DeepWordBug & 1697 & 29.1 & 1018.4 \\
\midrule
\rowcolor{gray!20} \textbf{MASH } & 3139 & 1.7 & 0 \\
\bottomrule
\end{tabular}
}

\caption{Comparison of resource usage. Units: GPU Memory (MiB), Inference Time (seconds/sample), and Query Cost (queries/sample).}
\label{tab:method_comparison_sorted}

\end{table}

\noindent \textbf{Efficiency Analysis.}
To evaluate practical feasibility, we exclude the Stage 4 refinement module (optional for users prioritizing ASR) and compare inference overhead. As reported in Table~\ref{tab:method_comparison_sorted}, perturbation-based attacks (e.g., Charmer) incur prohibitive query costs, while paraphrase-based methods (e.g., DIPPER) demand substantial GPU memory. In contrast, MASH achieves a lightweight memory footprint ($\sim$3GB) and rapid inference speed (1.7s).
For a comprehensive comparison including training and Stage 4 overhead, Figure~\ref{fig:query_efficiency} in Appendix~\ref{sec:appendix_additional_results} shows that MASH amortizes its overhead effectively: as attack samples scale to $10^4$, the average query cost per sample becomes negligible.



\subsection{Ablation Study}
\label{sec:ablation}

\begin{figure}[t]
    \centering
    \includegraphics[width=\columnwidth]{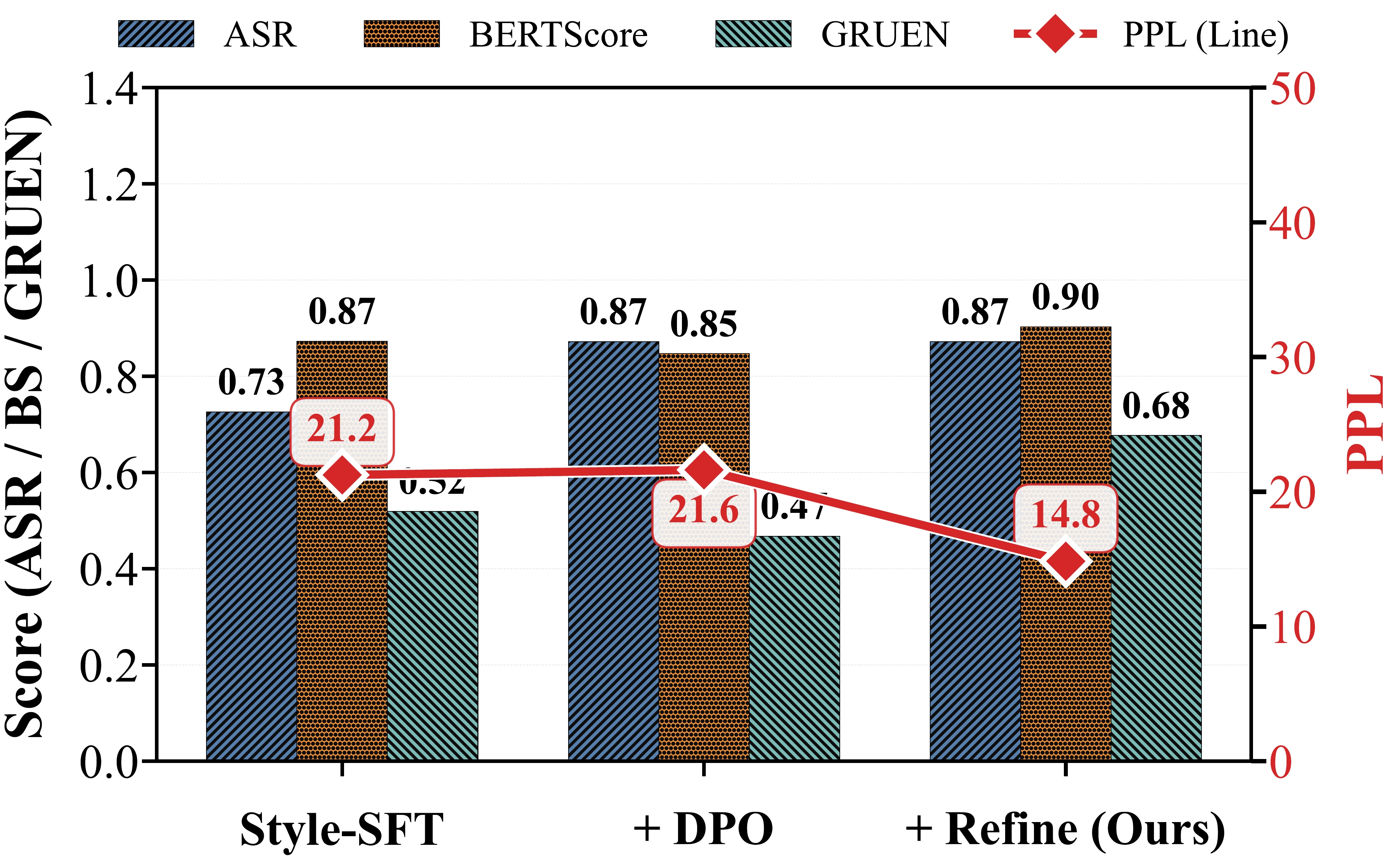} 
  \caption{Ablation study of MASH components, averaged across three detectors (RoBERTa, Binoculars, SCRN) and six datasets.}
  \label{fig:ave_ablation}
\end{figure}


\textbf{Impact of MASH Components.}
As shown in Figure~\ref{fig:ave_ablation}, DPO significantly improves ASR, while the refinement module restores the text quality degraded by DPO. Detailed ablation results are provided in Table~\ref{tab:ablation_study} (Appendix~\ref{sec:add_abla}).

\begin{figure}[t!] 
    \centering
    \includegraphics[width=0.95\columnwidth]{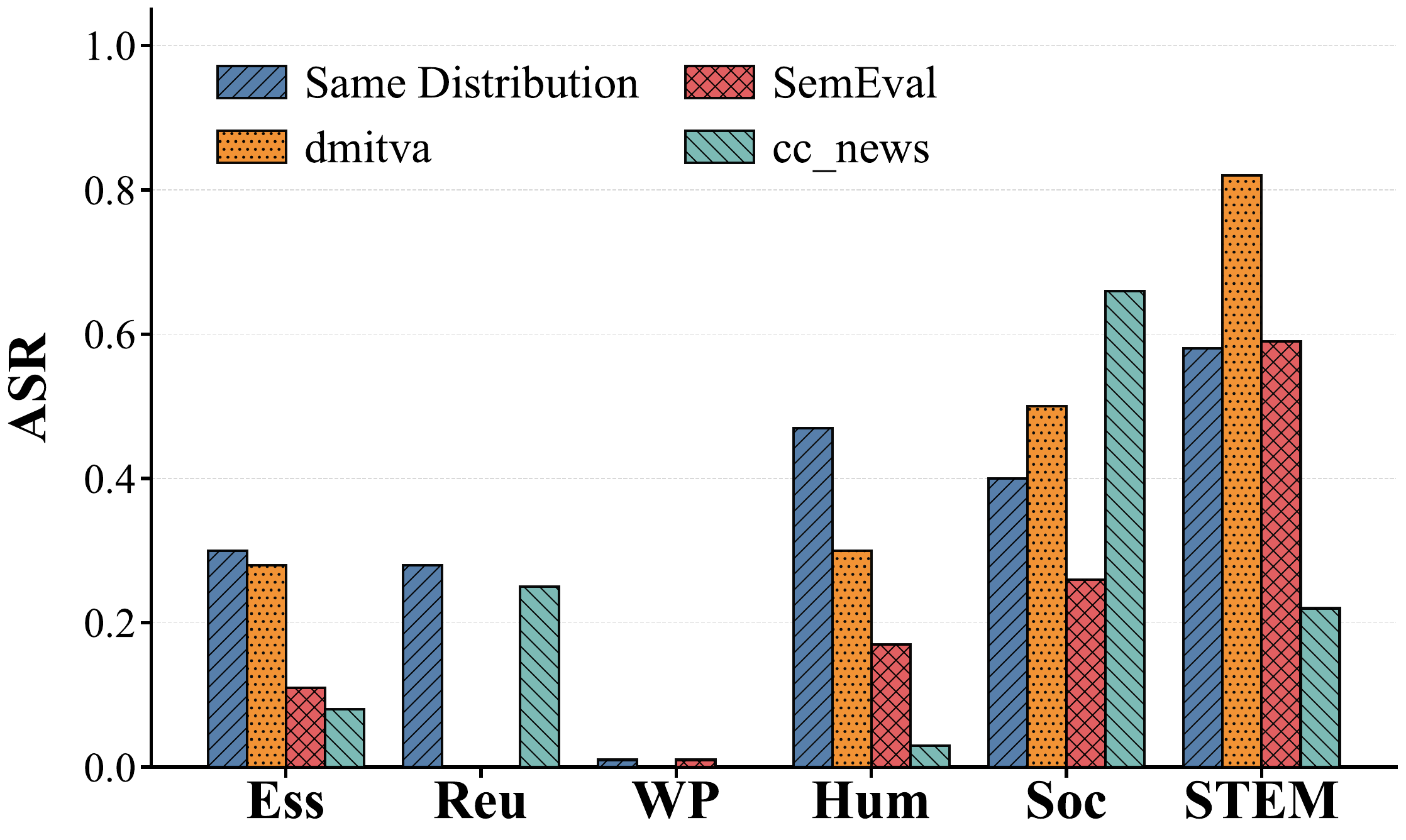} 
    \caption{Impact of data source selection.}
    \label{fig:differt_data_source}
    
    \vspace{10pt} 
    
    \includegraphics[width=0.95\columnwidth]{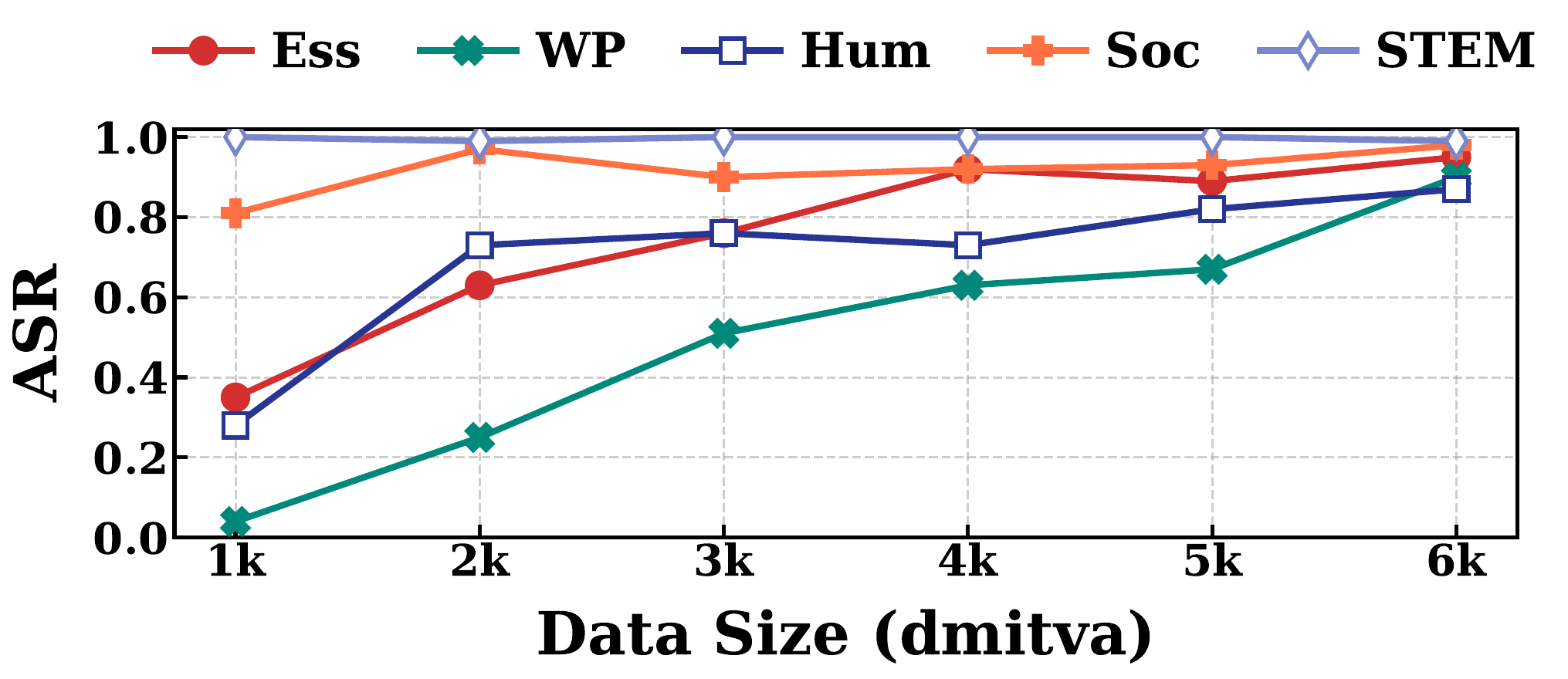}
    \caption{Impact of training data size.}
    \label{fig:different_data_scale}
    
    \vspace{-10pt} 
\end{figure}



\noindent \textbf{Impact of Data Source and Size.} 
We constructed the training data from four sources (see Figure~\ref{fig:differt_data_source}). To establish a fair comparison, we downsampled open-source datasets to match the limited ``Same Distribution'' (i.e., matching the detector's training domain) baseline counts.
Experimental results reveal that the effectiveness of generic datasets varies substantially across domains.
Notably, for the RoBERTa detector trained on Reuters, samples from 
\textit{dmitva}\footnote{\url{https://huggingface.co/datasets/dmitva/human_ai_generated_text}} and SemEval \citep{wang2024semeval} were predominantly classified as machine-generated, rendering them unsuitable for initializing style transfer in this domain.
In contrast, \textit{cc\_news}\footnote{\url{https://huggingface.co/datasets/vblagoje/cc_news}} proved effective. Figure~\ref{fig:different_data_scale} further confirms that performance scales with data size.

\begin{figure*}[t]
  \centering  
  \includegraphics[width=1\linewidth]{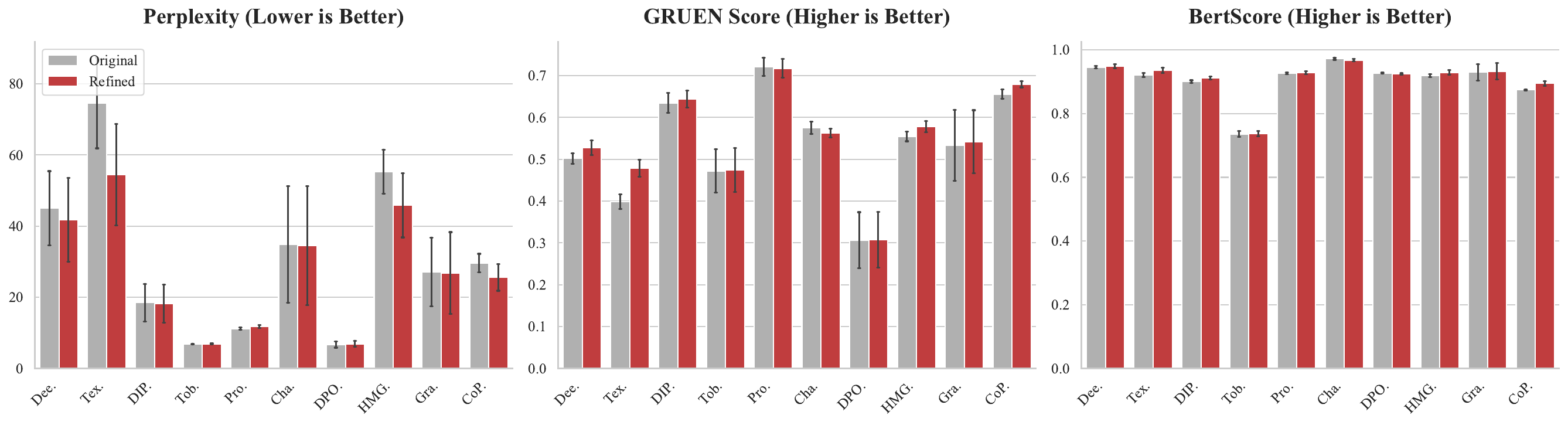} 
  \caption{Generalizability of inference-time refinement (Stage 4) to baseline attack methods, averaged across six datasets. Gray and red bars denote original and refined outputs, respectively; error bars indicate standard deviation.}
  \label{fig:stage4_baseline} 
\end{figure*}

\noindent \textbf{Generalizability of Refinement Module.}
To assess whether Stage 4 generalizes beyond MASH, we apply it to outputs generated by other baseline attack methods. As shown in Figure~\ref{fig:stage4_baseline}, the refinement module consistently improves text quality across most baselines.
Detailed results are provided in Table \ref{tab:refinement_comparison} (Appendix \ref{sec:add_abla}).


\noindent \textbf{Alternative Refinement Models.}
To reduce the cost of Stage 4, we evaluate open-source alternatives. As shown in Table~\ref{tab:roberta_quality_comparison} (Appendix~\ref{sec:add_abla}), Qwen2.5-7B-Instruct improves GRUEN from 0.468 to 0.631 on average, approaching GPT-5's 0.667. However, GPT-5 maintains a clear advantage in BERTScore (0.887 vs. 0.857). 
These open-source alternatives remain viable under budget constraints.

\subsection{Universality and Transferability Analysis}
\label{sec:uni_and_trans}

    
    
    

\begin{figure}[t]
    \centering
    \includegraphics[width=\columnwidth]{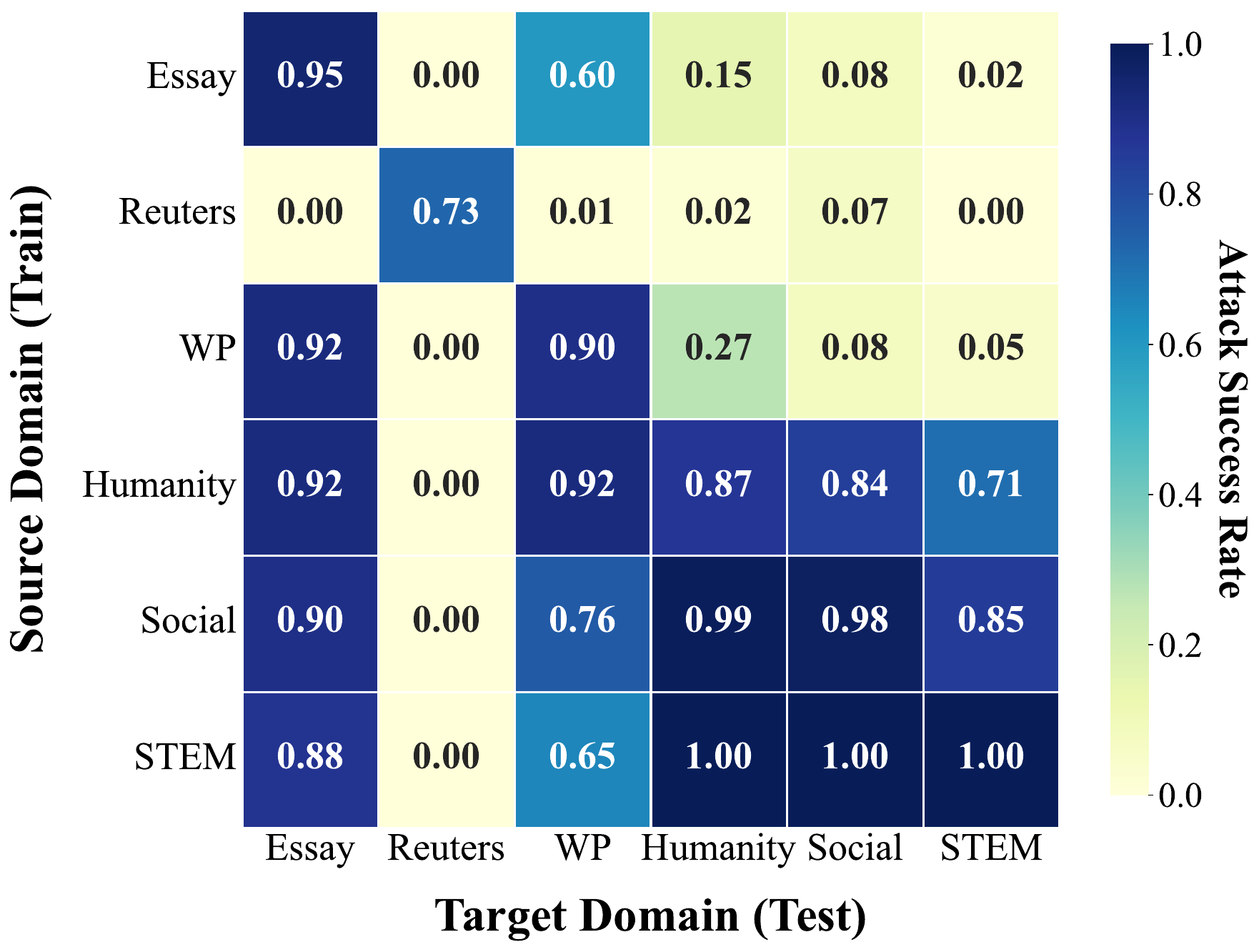} 
    \caption{Cross-domain evasion transferability.}
    \label{fig:domain_transfer}
\end{figure}

\noindent \textbf{Transferability across Domains.}
We evaluate cross-domain transferability by testing MASH against RoBERTa detectors fine-tuned on source domains and evaluated on different target domains.
As illustrated in Figure~\ref{fig:domain_transfer}, the results demonstrate robust generalization across most datasets. However, cross-domain scenarios involving Reuters exhibit an ASR approaching zero. We provide a detailed analysis of this phenomenon in Appendix~\ref{app:data_analysis}.

\begin{figure}[t]
    \centering
    \includegraphics[width=\columnwidth]{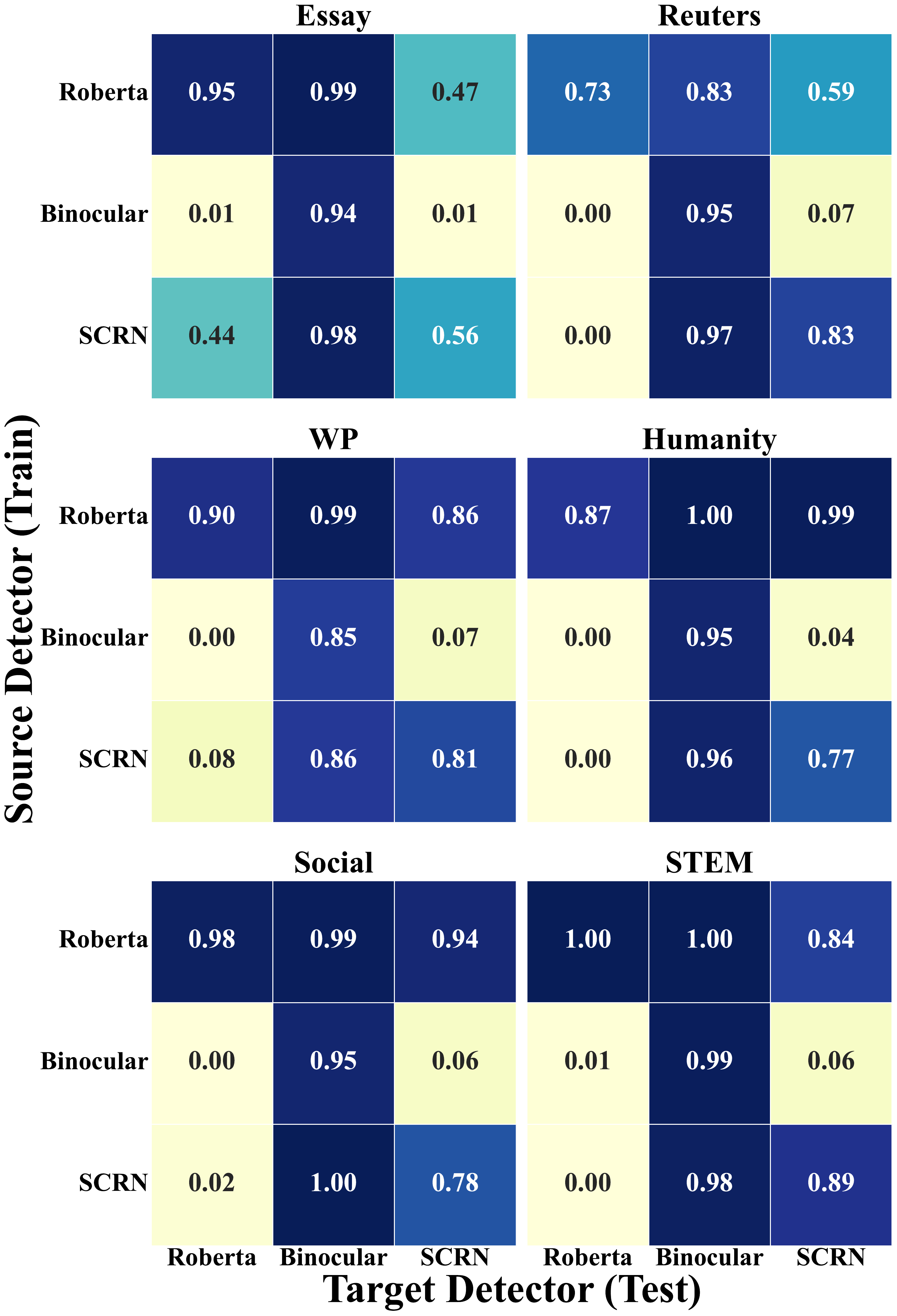} 
  \caption{Cross-detector evasion transferability.}
\label{fig:detector_transfer} \end{figure}

\noindent \textbf{Transferability across Detectors.} To assess whether MASH captures universal human-like characteristics or merely overfits specific detector boundaries, we evaluate cross-detector transferability. We additionally introduce SCRN \citep{huang-etal-2024-ai}, another RoBERTa-based fine-tuned detector.
As shown in Figure~\ref{fig:detector_transfer}, attacks optimized against supervised detectors (RoBERTa, SCRN) transfer effectively to zero-shot methods (Binoculars), whereas the reverse fails. 
This asymmetry suggests that supervised detector optimization implicitly corrects statistical artifacts exploited by zero-shot methods.

\subsection{Defense via Adversarial Training}

To mitigate risks, we explore adversarial training by fine-tuning detectors on MASH-polished samples. This defense reduces ASR from 92\% to near-zero, but degrades accuracy on clean samples by 18.7\% on average. Details are provided in Appendix~\ref{sec:appendix_defense}.

\section{Conclusion}
We introduce MASH, a black-box framework that reformulates detector evasion as a style transfer task.
Experiments show MASH outperforms state-of-the-art baselines against both open-source and commercial detectors, exposing the fragility of current defenses.
MASH serves as both a red-teaming tool and a foundation for more robust detection.

\section*{Limitations}

\noindent \textbf{Limited scope of the experiments.} While we have conducted a comprehensive evaluation across five representative detectors, six diverse domains, and eleven state-of-the-art baseline methods, the landscape of AI-generated text detection is rapidly evolving. We acknowledge that our benchmarking represents a snapshot of the current security landscape. Future work could extend the MASH framework to test against emerging detection paradigms and novel adversarial attack strategies, further validating its generalization capabilities in dynamic adversarial environments.

\noindent \textbf{Prerequisites for Distribution Alignment.} The effectiveness of our Style-SFT module relies on the availability of a seed set of human-written texts that theoretically fall within the target detector's ``human'' decision boundary. As indicated in our ablation study (see Figure~\ref{fig:differt_data_source}), significant distribution shifts in generic open-source corpora can hinder initialization. However, this dependency is not a methodological flaw but a logical prerequisite: evasion is only mathematically meaningful against detectors that maintain a reasonable False Positive Rate (FPR). If a detector fails to correctly classify ground-truth human text (i.e., exhibits high FPR), the adversarial objective of ``mimicking human style'' becomes redundant.

\noindent \textbf{Linguistic Generalization.} Our current evaluations are concentrated on English-language benchmarks, including MGTBench and MGT-Academic. The applicability of our style-injection and DPO alignment mechanisms to languages with complex morphological structures or low-resource settings remains unexplored. Future work could investigate the cross-lingual transferability of MASH to assess its robustness in diverse linguistic contexts.

\section*{Ethics Statement}

This research investigates the vulnerabilities of current AI-generated text detectors under black-box settings. The primary motivation for proposing MASH is to serve as a red-teaming framework to expose the fragility of existing detection paradigms.
We acknowledge that the techniques described herein could potentially be misused for academic dishonesty or disinformation dissemination. However, we believe that security through obscurity is unsustainable; publicly disclosing these vulnerabilities is essential for the community to develop more robust detection systems.

Furthermore, all experiments in this work utilize open-source datasets and models, strictly adhering to their usage policies and data privacy regulations. We urge developers of detection systems to move beyond simple statistical features and incorporate adversarial training to defend against evasion via style transfer.


\bibliography{custom}

\appendix

\section{Theoretical Analysis}

This section serves as a theoretical supplement to the methodology presented in Section \ref{sec:mash_methodology}. We provide formal justifications for the MASH framework, specifically elucidating the mechanism of DPO-based evasion, the construction of preference pairs, and the necessity of the multi-stage pipeline.

\subsection{Distribution Alignment via DPO}
\label{sec:appendix_alignment}

To theoretically ground our approach, we bridge the gap between the adversarial evasion goal and the DPO optimization objective.
In our framework, the preference labels are derived from a black-box detector $D(\mathbf{y}) \in [0, 1]$, which estimates the probability of $\mathbf{y}$ being AI-generated. We posit that the underlying human preference follows a Bradley-Terry model driven by an implicit reward function $r(\mathbf{x}, \mathbf{y})$.
Following \citet{wang2025humanizing}, we formulate this reward as a scaled and shifted version of the detector's confidence score. For theoretical convenience and without loss of generality, we align the reward definition in Eq.~\ref{eq:reward_def} with the Bradley-Terry preference model by adopting the following equivalent form: 
\begin{equation} r(\mathbf{x}, \mathbf{y}) = C \cdot (1 - D(\mathbf{y})), \end{equation}
where $C$ is a sufficiently large constant.
Under this formulation, maximizing the implicit reward is mathematically equivalent to minimizing the detector's AI probability score. Crucially, the large constant $C$ renders the Bradley-Terry probability nearly deterministic, thereby theoretically aligning the DPO optimization objective with our adversarial evasion goal.

\begin{theorem}[Optimality of Evasion]
Let $\pi_{\text{ref}}$ be the reference policy and $\pi^*$ be the optimal policy minimizing the DPO loss. The generated distribution $\pi^*$ aligns with the human-like regions defined by the detector $D$ (i.e., regions where $D(\mathbf{y}) \to 0$).
\end{theorem}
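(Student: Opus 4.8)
The plan is to identify the DPO optimum with the closed-form solution of a KL-regularized reward-maximization problem and then analyze how that solution concentrates as the reward scale $C$ grows. First I would invoke the standard RLHF reparameterization underlying DPO \citep{rafailov2023direct}: minimizing $\mathcal{L}_{\text{DPO}}$ in Eq.~\ref{eq:dpo_loss} is equivalent to solving $\max_{\pi}\ \mathbb{E}_{\mathbf{x}}\mathbb{E}_{\mathbf{y}\sim\pi(\cdot|\mathbf{x})}[r(\mathbf{x},\mathbf{y})] - \beta\,\mathrm{KL}(\pi(\cdot|\mathbf{x})\,\|\,\pi_{\text{ref}}(\cdot|\mathbf{x}))$, whose unique maximizer is the Gibbs policy
\begin{equation}
\pi^*(\mathbf{y}|\mathbf{x}) \;=\; \frac{1}{Z(\mathbf{x})}\,\pi_{\text{ref}}(\mathbf{y}|\mathbf{x})\,\exp\!\Big(\tfrac{1}{\beta}\,r(\mathbf{x},\mathbf{y})\Big),
\end{equation}
with $Z(\mathbf{x})$ the partition function. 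Substituting the reward $r(\mathbf{x},\mathbf{y}) = C\,(1-D(\mathbf{y}))$ from the hypothesis and cancelling the $\mathbf{y}$-independent factor $e^{C/\beta}$ gives $\pi^*(\mathbf{y}|\mathbf{x}) \propto \pi_{\text{ref}}(\mathbf{y}|\mathbf{x})\exp(-\tfrac{C}{\beta}D(\mathbf{y}))$, which already exhibits the desired monotone bias: within the support of $\pi_{\text{ref}}$, outputs with smaller detector score receive exponentially more mass, and the bias sharpens as $C$ grows (equivalently, the large $C$ makes the Bradley--Terry preferences near-deterministic, so the optimization faithfully tracks the ordering induced by $D$).

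Next I would turn the informal ``$D(\mathbf{y})\to 0$'' into a precise concentration statement. Let $m(\mathbf{x}) = \inf\{D(\mathbf{y}) : \pi_{\text{ref}}(\mathbf{y}|\mathbf{x})>0\}$ and, for $\eta>0$, let $\mathcal{G}_\eta(\mathbf{x}) = \{\mathbf{y} : D(\mathbf{y}) \le m(\mathbf{x})+\eta\}$. Bounding the numerator of $\pi^*(\mathcal{G}_\eta^c\,|\,\mathbf{x})$ from above using $D(\mathbf{y})>m+\eta$ on $\mathcal{G}_\eta^c$, and the partition function $Z(\mathbf{x})$ from below by the (positive) $\pi_{\text{ref}}$-mass $p_0$ of $\{D \le m+\eta/2\}$, one obtains $\pi^*(\mathcal{G}_\eta^c\,|\,\mathbf{x}) \le p_0^{-1}\,e^{-C\eta/(2\beta)} \to 0$ as $C\to\infty$. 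Hence $\mathbb{E}_{\mathbf{y}\sim\pi^*(\cdot|\mathbf{x})}[D(\mathbf{y})]\to m(\mathbf{x})$: the optimal policy collapses onto the minimum-detector-score region reachable from $\pi_{\text{ref}}$, matching the ``sufficiently large constant $C$'' hypothesis.

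Finally, to conclude $D(\mathbf{y})\to 0$ rather than merely $D(\mathbf{y})\to m(\mathbf{x})$, I would argue $m(\mathbf{x})\approx 0$ by invoking Stage~2: since $\pi_{\text{ref}}=\pi_{\text{SFT}}$ is trained with the translation objective $\mathcal{L}_{\text{trans}}$ on the filtered pairs $\mathcal{D}_{pair}$, whose human targets $\mathbf{x}_{human}$ satisfy $D(\mathbf{x}_{human})<\tau$ and indeed lie in the high-confidence human region, the reference policy places non-negligible mass on outputs with detector score near $0$, so $m(\mathbf{x})\to 0$; combined with the previous step this gives the claim. I expect this last step to be the main obstacle: the limiting statement is vacuous for an arbitrary $\pi_{\text{ref}}$ whose support cannot reach the human region, so the argument must route through the data-construction and initialization guarantees (the same seed-set coverage condition flagged in the Limitations) rather than treating $\pi_{\text{ref}}$ as given. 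A secondary subtlety is that the DPO-to-Gibbs identification presumes the preference pairs $(\mathbf{y}_w,\mathbf{y}_l)$ are expressive enough to pin down the reward up to its equivalence class --- precisely what the hard-negative construction is meant to ensure --- and that for any finite $C$ the conclusion is only approximate, consistent with reading the theorem in the $C\to\infty$ regime.
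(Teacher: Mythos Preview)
Your proposal is correct and follows essentially the same route as the paper: invoke the Gibbs closed form $\pi^*(\mathbf{y}|\mathbf{x}) \propto \pi_{\text{ref}}(\mathbf{y}|\mathbf{x})\exp(r(\mathbf{x},\mathbf{y})/\beta)$ from \citet{rafailov2023direct}, substitute the detector-based reward, and argue that a large $C/\beta$ sharpens the exponential toward the region $D(\mathbf{y})\approx 0$. Your treatment is in fact more careful than the paper's --- the explicit concentration bound on $\mathcal{G}_\eta^c$ and the reduction of ``$D\to 0$'' to ``$m(\mathbf{x})\approx 0$ via Stage-2 coverage'' formalize what the paper states only heuristically (and the paper defers the support-coverage issue to a separate Support-Constraint theorem rather than folding it in here).
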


\begin{proof}
As derived by \citet{rafailov2023direct}, the optimal policy $\pi^*$ for the KL-constrained reward maximization problem is uniquely determined by the reward function and the reference policy, taking the following closed-form solution:
\begin{equation}
    \pi^*(\mathbf{y}|\mathbf{x}) = \frac{1}{Z(\mathbf{x})} \pi_{\text{ref}}(\mathbf{y}|\mathbf{x}) \exp\left( \frac{1}{\beta} r(\mathbf{x}, \mathbf{y}) \right),
    \label{eq:dpo_optimal}
\end{equation}
where $Z(\mathbf{x})$ is the partition function. Substituting our evasion-oriented implicit reward $r \propto (1 - D(\mathbf{y}))$ into Eq.~\ref{eq:dpo_optimal}:
\begin{equation}
    \pi^*(\mathbf{y}|\mathbf{x}) \propto \pi_{\text{ref}}(\mathbf{y}|\mathbf{x}) \exp\left( \frac{C}{\beta} (1 - D(\mathbf{y})) \right).
\end{equation}

As the coefficient $\frac{C}{\beta}$ is large, the exponential term acts as a sharp filter. It amplifies the probability mass of samples where $1 - D(\mathbf{y}) \approx 1$ (i.e., $D(\mathbf{y}) \approx 0$, corresponding to human-written text) and suppresses regions where $D(\mathbf{y}) \approx 1$. Consequently, minimizing the DPO loss forces the generator's distribution to converge towards the human distribution characterized by the detector.
\end{proof}

\begin{table}[t]
\centering
\small

\resizebox{\columnwidth}{!}{
\begin{tabular}{lccccc}
\toprule
 & \textbf{Essay} & \textbf{Reuters} & \textbf{WP} & \textbf{Humanity} & \textbf{Social} \\
\midrule
Stage 1 & 0.83 & 0.09 & 0.68 & 0.68 & 0.92 \\
Stage 2 (AP)$^{\dagger}$ & 0.87 & 0.61 & 0.84 & 0.66 & 0.87 \\
Stage 2 (HN)$^{\ddagger}$ & 0.95 & 0.73 & 0.9 & 0.87 & 0.98 \\
\bottomrule
\end{tabular}}
\begin{flushleft}
\footnotesize
\textit{Note:} AP = Ambiguous Pairs, HN = Hard Negatives.
\end{flushleft}

\caption{Comparison of ASR across different domains.}

\label{tab:stage_comparison_aligned}

\end{table}

\subsection{Analysis of Hard Negatives}
\label{sec:appendix_hard_negative}

We theoretically justify why constructing hard negative pairs is superior to random or easy negatives.

\begin{proposition}[Optimization Consistency \& Gradient Efficiency]
\label{prop:hard_negative}
Constructing preference pairs with a maximal detector score gap $\Delta D = D(y_l) - D(y_w) \approx 1$ is necessary to ensure consistency between empirical labels and the Bradley-Terry model, and to maximize the optimization gradient.
\end{proposition}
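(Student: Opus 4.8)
The plan is to establish the proposition in two parts corresponding to its two claims: (i) consistency between empirical preference labels and the Bradley-Terry model, and (ii) maximization of the DPO optimization gradient. For part (i), I would start from the reward definition $r(\mathbf{x},\mathbf{y}) = C\cdot(1-D(\mathbf{y}))$ established in Appendix~\ref{sec:appendix_alignment} and write down the Bradley-Terry probability that $\mathbf{y}_w$ is preferred over $\mathbf{y}_l$, namely $\Pr(\mathbf{y}_w \succ \mathbf{y}_l) = \sigma\big(r(\mathbf{x},\mathbf{y}_w) - r(\mathbf{x},\mathbf{y}_l)\big) = \sigma\big(C\cdot(D(\mathbf{y}_l) - D(\mathbf{y}_w))\big) = \sigma(C\,\Delta D)$. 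The empirical construction assigns a hard label (the human text is always chosen), which corresponds to $\Pr(\mathbf{y}_w \succ \mathbf{y}_l) = 1$. Since $\sigma(C\,\Delta D)\to 1$ requires $C\,\Delta D \to +\infty$, and $C$ is a fixed (if large) constant, the only way to make the model-predicted preference probability match the empirical hard label is to make $\Delta D$ as large as possible; with $D(\cdot)\in[0,1]$ the supremum $\Delta D \approx 1$ is attained exactly when $D(\mathbf{y}_l)\approx 1$ and $D(\mathbf{y}_w)\approx 0$. This is precisely the hard-negative selection criterion $D(\mathbf{y}_l) > \tau$ combined with the Stage~1 filtering $D(\mathbf{y}_w) = D(\mathbf{x}_{human}) < \tau$, so I would close part (i) by noting that random or easy negatives (small $\Delta D$) give $\sigma(C\,\Delta D)$ bounded away from $1$, creating a systematic mismatch with the empirical labeling.

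For part (ii), I would compute the gradient of the per-example DPO loss in Eq.~\ref{eq:dpo_loss}. Writing $u = h_\theta(\mathbf{y}_w|\mathbf{x}) - h_\theta(\mathbf{y}_l|\mathbf{x})$, the loss is $-\log\sigma(u)$ and its gradient is $-\sigma(-u)\,\nabla_\theta u$. The scalar prefactor $\sigma(-u) = 1 - \sigma(u)$ is the weight the optimizer places on this pair. The key observation is that at initialization (the Stage~2 SFT policy used as $\pi_{\text{ref}}$), easy negatives are ones the model already handles well, meaning $\pi_\theta(\mathbf{y}_w|\mathbf{x})$ is already much larger than $\pi_\theta(\mathbf{y}_l|\mathbf{x})$ relative to the reference, so $u \gg 0$ and $\sigma(-u)\approx 0$ — the gradient vanishes. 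Hard negatives, by contrast, are exactly the responses the Stage~2 model generated and that still fool-fail (i.e., $D(\mathbf{y}_l) > \tau$), so the model currently assigns them comparable or higher likelihood, $u$ is near zero or negative, and $\sigma(-u)$ is bounded away from zero, yielding a non-vanishing gradient. I would make this precise by relating the sign/magnitude of $u$ at initialization to $\Delta D$ through the DPO-optimal-policy characterization of Eq.~\ref{eq:dpo_optimal}: under the reward $r \propto (1-D)$, the target log-ratio $h^*(\mathbf{y}|\mathbf{x})$ is an affine function of $-D(\mathbf{y})$, so the target margin $u^* = h^*(\mathbf{y}_w) - h^*(\mathbf{y}_l) \propto \Delta D$; when $\pi_{\text{ref}}$ already realizes part of this margin on easy pairs, little gradient remains, whereas for hard pairs the realized margin is small and the full $\propto\Delta D$ update is needed, so a larger $\Delta D$ both maximizes the useful gradient signal and matches the direction the optimal policy must move.

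I would then combine the two parts: the hard-negative criterion $D(\mathbf{y}_l) > \tau$, $D(\mathbf{y}_w) < \tau$ simultaneously (a) drives $\Delta D$ toward its maximal value $\approx 1$, restoring consistency between the deterministic empirical labels and the (nearly deterministic, since $C$ is large) Bradley-Terry model, and (b) keeps the loss-gradient prefactor $\sigma(-u)$ bounded away from zero at initialization, preventing the vanishing-gradient pathology that afflicts easy or random pairs. The empirical ASR gap in Table~\ref{tab:stage_comparison_aligned} between ``Ambiguous Pairs'' and ``Hard Negatives'' can be cited as corroboration. The main obstacle I anticipate is part (ii): making the ``gradient efficiency'' claim rigorous requires a non-circular way to tie the initialization margin $u$ to $\Delta D$ without assuming $\pi_{\text{ref}}$ is already optimal — I would handle this by treating it as a local statement (the gradient \emph{magnitude} at the reference policy is monotone in $\sigma(-u)$, which is monotone decreasing in $u$, which is small precisely for hard negatives by construction), rather than attempting a global convergence argument, and by being explicit that ``necessary'' here means necessary for the empirical objective to be a faithful surrogate for the Bradley-Terry MLE, not necessary in an information-theoretic sense.
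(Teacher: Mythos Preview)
Your proposal is correct and follows essentially the same two-part structure as the paper: deriving $P_{\text{BT}}(\mathbf{y}_w \succ \mathbf{y}_l) = \sigma(C\,\Delta D)$ and arguing that $\Delta D \approx 1$ is needed to match the hard empirical label, then identifying the gradient prefactor $\sigma(-u) = \sigma(\hat{r}_l - \hat{r}_w)$ and arguing it is bounded away from zero for hard negatives (where the SFT model still assigns high likelihood to $\mathbf{y}_l$) but vanishes for easy ones. Your treatment is, if anything, more careful than the paper's---you explicitly flag the non-circularity issue in tying the initialization margin $u$ to $\Delta D$ and correctly downgrade the claim to a local statement at $\pi_{\text{ref}}$, whereas the paper simply asserts the inverted-reward condition $\hat{r}_l > \hat{r}_w$ without justification beyond ``the model confidently errs.''
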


\begin{proof}
We define the evasion-oriented implicit reward as $r(y) = -C \cdot D(y)$ (where $C > 0$), meaning higher detection scores yield lower rewards. Substituting this into the Bradley-Terry model:
\begin{equation}
\begin{split}
    P_{\text{BT}}(y_w \succ y_l) &= \sigma(r(y_w) - r(y_l)) \\
    &= \sigma\big(-C D_w - (-C D_l)\big) \\
    &= \sigma\big(C \cdot \underbrace{(D_l - D_w)}_{\Delta D}\big),
\end{split}
\end{equation}
where $\sigma$ denotes the logistic sigmoid function. This derivation confirms that maximizing the probability requires maximizing the gap $\Delta D$. Table~\ref{tab:stage_comparison_aligned} confirms that employing hard negative samples yields the most significant improvement in ASR.

\begin{itemize}
    \item \textbf{Ambiguous Pairs ($\Delta D \approx 0$):} If $y_l$ evades detection ($D_l \approx 0$), then $\Delta D \approx 0$ and $P_{\text{BT}} \approx \sigma(0) = 0.5$. This contradicts the deterministic hard label ($y_w \succ y_l$), introducing label noise.
    \item \textbf{Hard Negatives ($\Delta D \approx 1$):} With $D_l \approx 1$ and $D_w \approx 0$, we have $P_{\text{BT}} \approx \sigma(C) \to 1$. This aligns with the empirical label.
\end{itemize}

The gradient of the DPO loss $\mathcal{L}_{\text{DPO}}$ in Eq.~\ref{eq:dpo_loss} with respect to model parameters $\theta$ is explicitly scaled by a weighting term $w(\theta)$:
\begin{equation}
    \nabla_\theta \mathcal{L}_{\text{DPO}} = -\mathbb{E} \left[ \underbrace{\sigma(\hat{r}_l - \hat{r}_w)}_{w(\theta)} \nabla_\theta \log \frac{\pi_\theta(y_w|x)}{\pi_\theta(y_l|x)} \right],
\end{equation}
where $\hat{r}_l$ and $\hat{r}_w$ denote the implicit rewards $\hat{r}_\theta(y_l)$ and $\hat{r}_\theta(y_w)$, respectively. For hard negatives, the model initially assigns high probability to the machine artifact $y_l$ (i.e., it confidently errs), leading to an inverted reward estimate where $\hat{r}_l > \hat{r}_w$. Consequently, the weighting term $w(\theta) > 0.5$, generates a strong error-correction signal. Conversely, easy negatives result in $\hat{r}_l \ll \hat{r}_w$, causing $w(\theta) \to 0$ and leading to vanishing gradients.

\end{proof}

\subsection{The Necessity of Style-SFT}
\label{sec:appendix_initialization}

Finally, we prove the necessity of Style-SFT (Stage 2) for effective DPO alignment.

\begin{theorem}[Support Constraint]
Let $\mathcal{Y}_{H} = \{ \mathbf{y} \mid D(\mathbf{y}) < \tau \}$ denote the target human-like region. If the reference policy $\pi_{\text{ref}}$ assigns negligible mass to $\mathcal{Y}_{H}$, the DPO-optimized policy $\pi^*$ cannot converge to $\mathcal{Y}_{H}$ regardless of the reward magnitude.
\end{theorem}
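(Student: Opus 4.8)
The plan is to exploit the closed-form solution of the DPO-optimal policy from Eq.~\ref{eq:dpo_optimal}, namely $\pi^*(\mathbf{y}|\mathbf{x}) = \frac{1}{Z(\mathbf{x})}\pi_{\text{ref}}(\mathbf{y}|\mathbf{x})\exp(\frac{1}{\beta}r(\mathbf{x},\mathbf{y}))$, and observe that the reference policy appears as a \emph{multiplicative} factor. First I would fix the source input $\mathbf{x}$ and write $\pi^*(\mathcal{Y}_H \mid \mathbf{x}) = \frac{1}{Z(\mathbf{x})}\sum_{\mathbf{y}\in\mathcal{Y}_H}\pi_{\text{ref}}(\mathbf{y}|\mathbf{x})\exp(\frac{1}{\beta}r(\mathbf{x},\mathbf{y}))$. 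Since the reward $r(\mathbf{x},\mathbf{y}) = C(1-D(\mathbf{y}))$ is bounded above by $C$, the numerator is bounded: $\sum_{\mathbf{y}\in\mathcal{Y}_H}\pi_{\text{ref}}(\mathbf{y}|\mathbf{x})\exp(\frac{1}{\beta}r) \le e^{C/\beta}\,\pi_{\text{ref}}(\mathcal{Y}_H\mid\mathbf{x})$. The hypothesis that $\pi_{\text{ref}}$ assigns negligible mass to $\mathcal{Y}_H$ means $\pi_{\text{ref}}(\mathcal{Y}_H\mid\mathbf{x}) \le \eta$ for some tiny $\eta$; hence the numerator is at most $\eta\, e^{C/\beta}$.

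Next I would lower-bound the partition function $Z(\mathbf{x})$ to control the ratio. Because $Z(\mathbf{x}) = \sum_{\mathbf{y}}\pi_{\text{ref}}(\mathbf{y}|\mathbf{x})\exp(\frac{1}{\beta}r(\mathbf{x},\mathbf{y})) \ge \exp(\frac{1}{\beta}\cdot 0) = 1$ trivially (rewards are nonnegative), or more carefully $Z(\mathbf{x}) \ge \sum_{\mathbf{y}\notin\mathcal{Y}_H}\pi_{\text{ref}}(\mathbf{y}|\mathbf{x})\exp(\frac{1}{\beta}r) \ge \pi_{\text{ref}}(\mathcal{Y}_H^c\mid\mathbf{x}) \ge 1-\eta$. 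Combining the two bounds,
\begin{equation}
\pi^*(\mathcal{Y}_H \mid \mathbf{x}) \;\le\; \frac{\eta\, e^{C/\beta}}{1-\eta},
\end{equation}
which is a bound that does \emph{not} vanish as $C\to\infty$ is held fixed, but crucially it remains bounded by a constant multiple of $\eta$ \emph{for any fixed $C$}. The key structural point I want to surface is that while the reward magnitude $C$ amplifies the \emph{relative} weight of high-reward regions, it cannot create support: if $\pi_{\text{ref}}(\mathbf{y}|\mathbf{x})=0$ then $\pi^*(\mathbf{y}|\mathbf{x})=0$ identically, and if it is merely negligible the total mass on $\mathcal{Y}_H$ stays $O(\eta)$ up to the fixed factor $e^{C/\beta}$. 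I would then note that taking expectation over $\mathbf{x}$ gives the same conclusion for the marginal $\pi^*(\mathcal{Y}_H)$, establishing that $\pi^*$ cannot converge to $\mathcal{Y}_H$.

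The main obstacle is making precise what ``cannot converge regardless of reward magnitude'' means, since naively one might think sending $C\to\infty$ rescues the policy. The resolution I would emphasize is the order of quantifiers: for the bound to be meaningful, $C$ is a fixed (if large) hyperparameter chosen independently of $\pi_{\text{ref}}$, so $e^{C/\beta}$ is a fixed constant and $\eta\,e^{C/\beta}/(1-\eta)\to 0$ as $\eta\to 0$; no finite $C$ can compensate for $\pi_{\text{ref}}$ having vanishing support on $\mathcal{Y}_H$. Equivalently, in the degenerate case $\pi_{\text{ref}}(\mathcal{Y}_H\mid\mathbf{x})=0$, the multiplicative form forces $\pi^*(\mathcal{Y}_H\mid\mathbf{x})=0$ exactly. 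This motivates Stage~2: Style-SFT is precisely the mechanism that shifts $\pi_{\text{ref}}$ so that $\pi_{\text{ref}}(\mathcal{Y}_H)$ is non-negligible, giving DPO something to amplify. I would close by remarking that a matching positive statement — if $\pi_{\text{ref}}(\mathcal{Y}_H)\ge\rho>0$ then $\pi^*(\mathcal{Y}_H)\ge 1-O(\rho^{-1}e^{-C/\beta})$ — follows symmetrically and confirms that support coverage is both necessary and, combined with large $C$, sufficient.
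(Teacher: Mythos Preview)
Your proposal is correct and follows essentially the same approach as the paper: both arguments hinge on the multiplicative closed-form $\pi^*(\mathbf{y}|\mathbf{x}) \propto \pi_{\text{ref}}(\mathbf{y}|\mathbf{x})\exp(r/\beta)$ from Eq.~\ref{eq:dpo_optimal} to conclude that vanishing $\pi_{\text{ref}}$ on $\mathcal{Y}_H$ forces vanishing $\pi^*$ there. Your version is in fact more careful than the paper's---the paper argues pointwise and heuristically that ``$\pi^*(\mathbf{y}|\mathbf{x}) \approx 0 \cdot \exp(\dots) = 0$'' without bounding $Z(\mathbf{x})$ or addressing the order-of-quantifiers issue in ``regardless of reward magnitude,'' both of which you handle explicitly.
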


\begin{proof}
Recalling the closed-form optimal policy derived in Eq.~\ref{eq:dpo_optimal}. This multiplicative relationship implies that $\pi^*$ is strictly bounded by the support of $\pi_{\text{ref}}$. We analyze two initialization scenarios:
\begin{itemize}
    \item \textbf{Weak Initialization (Zero Support):} Without style injection, standard LLMs often exhibit a domain gap, such that $\forall \mathbf{y} \in \mathcal{Y}_{H}, \pi_{\text{ref}}(\mathbf{y}|\mathbf{x}) \approx 0$. Consequently, even if the implicit reward $r$ is maximized for human text, the posterior probability $\pi^*(\mathbf{y}|\mathbf{x}) \approx 0 \cdot \exp(\dots) = 0$. The model faces a ``cold start'' problem where high-reward regions are theoretically unreachable.
    \item \textbf{Style-SFT Initialization (Warm Start):} Stage 2 explicitly shifts probability mass towards the target style, ensuring $\pi_{\text{SFT}}(\mathbf{y}|\mathbf{x}) > \epsilon$, where $\epsilon$ is a minimal existence threshold. Crucially, a larger initial probability $\pi_{\text{SFT}}$ minimizes the distribution shift required to reach $\pi^*$, thereby reducing the optimization effort needed to satisfy the exponential scaling.
\end{itemize}

Thus, Stage 2 is a mathematical prerequisite to ensure the target region lies within the feasible search space of DPO.
\end{proof}

\section{Experimental Details}
\label{app:exp_details}

\noindent \textbf{Datasets.} 
We conduct experiments on two diverse benchmarks:
(1) MGTBench \citep{he2024mgtbench}, which comprises diverse writing styles: Essays (student argumentations), WP (creative fiction from Reddit WritingPrompts), and Reuters (financial and global news). Its AI samples are generated by seven LLMs (e.g., GPT-3.5, Llama, and Claude) using task-specific prompts to mimic human writing.
(2) MGT-Academic \citep{liu2025generalization}, covering formal scholarly discourse across: STEM (scientific and technical papers from ArXiv), Social Science (encyclopedic entries from Wikipedia), and Humanity (literary books from Project Gutenberg). Advanced LLMs (e.g., GPT-4, Gemini) generate these counterparts via academic paraphrasing or completion while preserving formal discourse.

To ensure rigorous evaluation, each subset is strictly partitioned into four splits—detector training, validation, shadow training, and testing—with a ratio of 3:1:1:1. The first two are used for fine-tuning the target RoBERTa detector, while the shadow set serves to train surrogate models for white-box baselines. Following \citet{fang-etal-2025-language}, we randomly sample 100 machine-generated instances per domain for the final testing.


\noindent \textbf{Target Detectors.} 
We evaluate evasion performance against five black-box detectors. (1) Open-source: RoBERTa is fine-tuned on the training split of each subset, serving as a robust supervised baseline as identified by \citet{zheng2025th}. Binoculars \citep{hans2024spotting} is a zero-shot detector requiring no additional training. SCRN \citep{huang-etal-2024-ai} is used with its official pre-trained weights. (2) Commercial: Following \citet{jianwang2024empirical}, we include Writer\footnote{\url{https://writer.com/ai-content-detector/}} and Scribbr\footnote{\url{https://www.scribbr.com/ai-detector/}}. All detectors are evaluated under black-box settings, with access limited to confidence scores or binary labels.

\noindent \textbf{Baselines.}
We benchmark MASH against 11 state-of-the-art evasion methods, categorized into perturbation-, prompt-, and paraphrase-based attacks. Perturbation-based methods include DeepWordBug, TextBugger, TextFooler, Charmer, and HMGC. PromptAttack is the prompt-based baseline, which guides LLMs to generate evasion-prone text through optimized instructions. Paraphrase-based methods comprise DIPPER, Toblend, CoPA, GradEscape, and DPO-Evader. To maintain a strict black-box setting, HMGC and GradEscape are implemented by training auxiliary RoBERTa detectors on shadow training data.

For DIPPER, we configure both lexical and order diversity to 40. In the Toblend evaluation, we employ GPT2-XL and OPT-2.7B as the source generators. Due to the absence of original generation prompts in the MGT-Academic benchmark, Toblend’s assessment is restricted to the MGTBench dataset. CoPA is implemented using Qwen-7B as the backbone paraphraser.
For DPO-Evader, we adopt Qwen-0.5B as the paraphraser to accommodate GPU memory constraints during training. Empirically, we observed that the 0.5B model achieves higher attack success rates than larger variants (e.g., 3B+), suggesting that smaller models may introduce more textual noise conducive to distribution alignment. All baseline implementations and hyperparameter configurations follow their respective original protocols unless otherwise specified.

\noindent \textbf{Metrics.}
Attack Success Rate (ASR) is the proportion of polished texts $\{x_{adv}^{(i)}\}_{i=1}^N$ classified as ``human-written'' by detector $D$:
\begin{equation}
\text{ASR} = \frac{1}{N} \sum_{i=1}^{N} \mathbbm{1}(D(x^{(i)}_{adv}) < \tau),
\end{equation}
where $\mathbbm{1}(\cdot)$ is the indicator function and $\tau$ is the decision threshold.

Following \citet{Zhou2024_COLING}, we assess fluency via Perplexity (PPL), calculated using Pythia-2.8b. A lower PPL indicates superior text fluency. For a sequence $x$ of length $T$, PPL is computed as the exponential of the average negative log-likelihood:
\begin{equation}
\text{PPL}(x) = \exp \left( -\frac{1}{T} \sum_{t=1}^{T} \log P(x_t \mid x_{<t}) \right).
\end{equation}

To evaluate semantic consistency between $x_{ai}$ and $x_{adv}$, we employ BERTScore \citep{zhang2019bertscore}. BERTScore computes token-level similarity using contextual embeddings and aggregates them into precision, recall, and F1 metrics. Following standard practice, we report the F1-score to provide a robust assessment of semantic integrity.

Distinct from PPL, we employ GRUEN \citep{zhu2020gruen} for a holistic, reference-free linguistic quality assessment. Higher GRUEN values indicate better overall linguistic quality.

\noindent \textbf{Implementation and Training Details.}
All experiments were conducted on a single NVIDIA RTX 3090 GPU, with a complete training cycle per domain lasting approximately 7.5 hours.

During the Style-SFT stage, We jointly optimize $\mathcal{L}_{recon}$ and $\mathcal{L}_{trans}$ with $\lambda = 0.5$.
We employ the AdamW optimizer with a learning rate of $2 \times 10^{-5}$ and a batch size of 8. The model is trained for 50 epochs with early stopping.

During the DPO phase, the learning rate is adjusted to $5 \times 10^{-6}$ for 5 epochs. To accommodate GPU memory constraints, we use a per-device batch size of 2 combined with 8 gradient accumulation steps.
For inference, we utilize beam search with a beam size of 4.

For the Reuters domain in MGTBench, we utilize the \texttt{vblagoje/cc\_news} dataset, while for all other domains, we employ the \texttt{dmitva/human\_ai\_generated\_text} dataset.
We paraphrase human sources using Qwen2.5-3B-Instruct, Llama-3-8B-Instruct, and GPT-2.
The vLLM framework is adopted to accelerate the inference process. This pipeline yielded approximately 6,000 validated pairs, achieving a label-flip success rate (transitioning from human-written to AI-generated) of approximately 95\%.

\begin{figure}[h]
\centering
\includegraphics[width=\columnwidth]{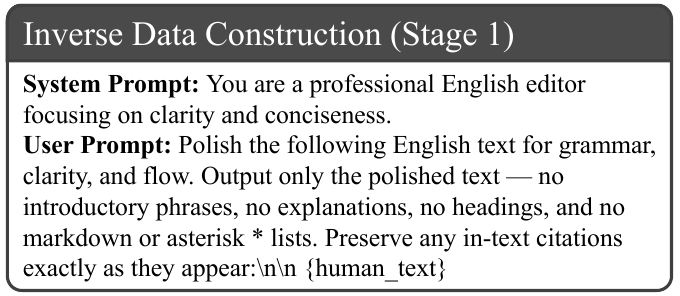}
\end{figure}

In Stage 4, sentence-level polish candidates are generated by querying the GPT-5 API. This refinement module ensures grammatical fluency while strictly preserving the evasion status of the text.

\begin{figure}[h]
\centering
\includegraphics[width=\columnwidth]{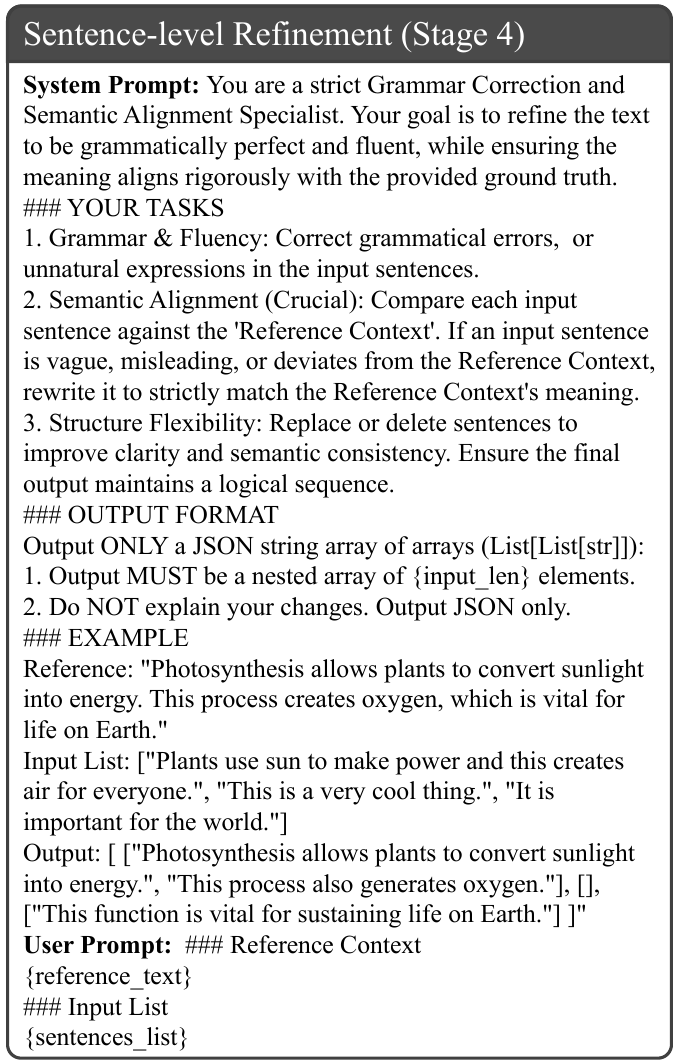}
\end{figure}

\section{Additional Experimental Results}
\label{sec:appendix_additional_results}

\begin{figure}[t]
\centering
\includegraphics[width=\columnwidth]{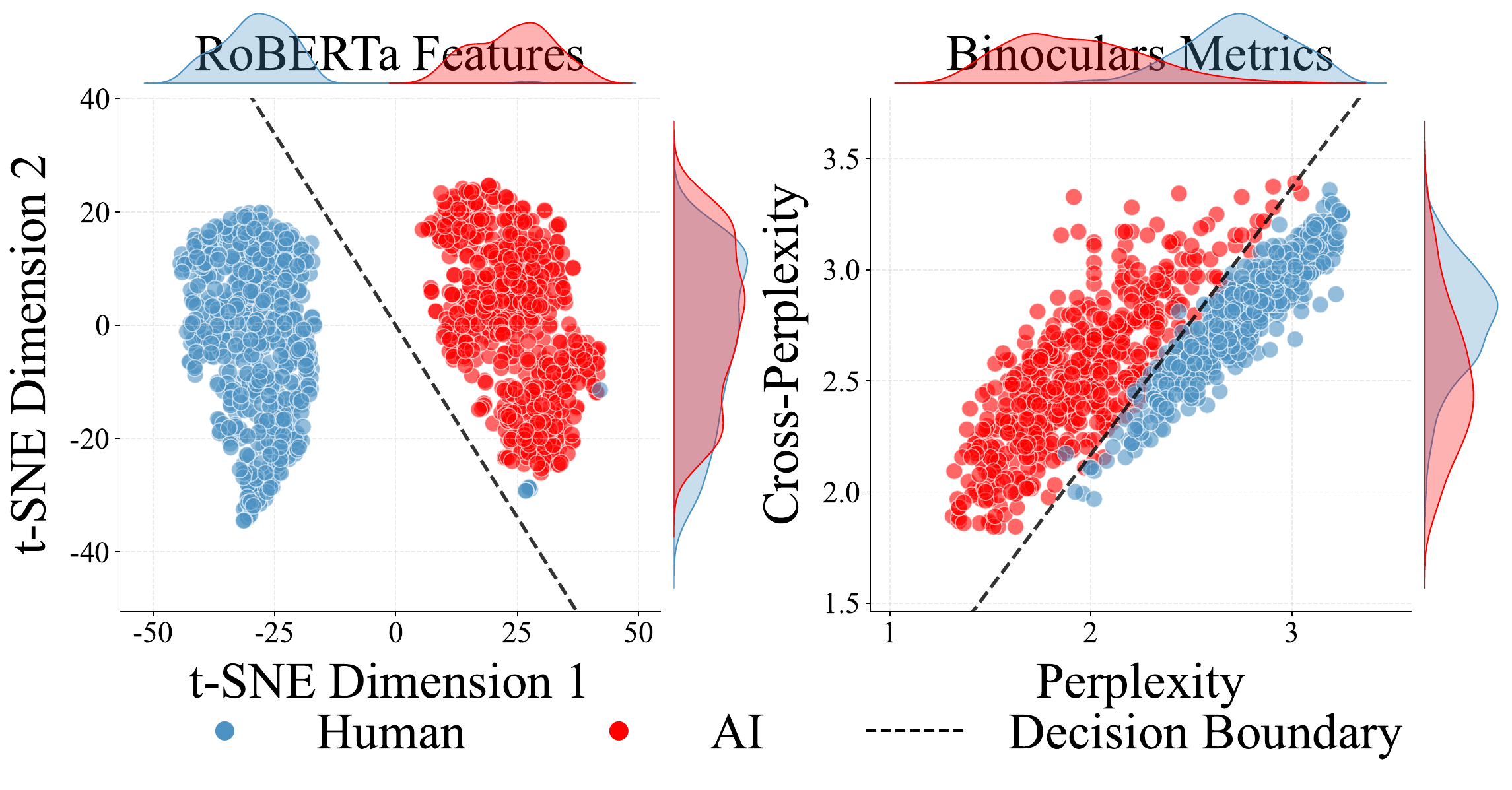}
\caption{Visualization of the feature distribution gap between AI-generated and human-written text. The left plot shows the t-SNE projection of semantic features from the RoBERTa detector. The right plot illustrates the statistical features from the Binoculars detector. Marginal KDEs highlight dense regions.}
\label{fig:motivation}
\end{figure}

\noindent \textbf{Visualization of Decision Boundaries.} The superior performance of current AI-generated text detectors is fundamentally rooted in the distinct distribution discrepancy between machine-generated and human-written text. 
As shown in Figure~\ref{fig:motivation}, based on 1,200 samples from the Essay dataset, distinct separation is observable in both the deep semantic feature space extracted by supervised detectors and the statistical feature space utilized by zero-shot detectors. 
Consequently, we posit that effective evasion should not be merely adversarial noise injection against a detector, but rather a style transfer task.

\begin{figure}[t]
\includegraphics[width=\columnwidth]{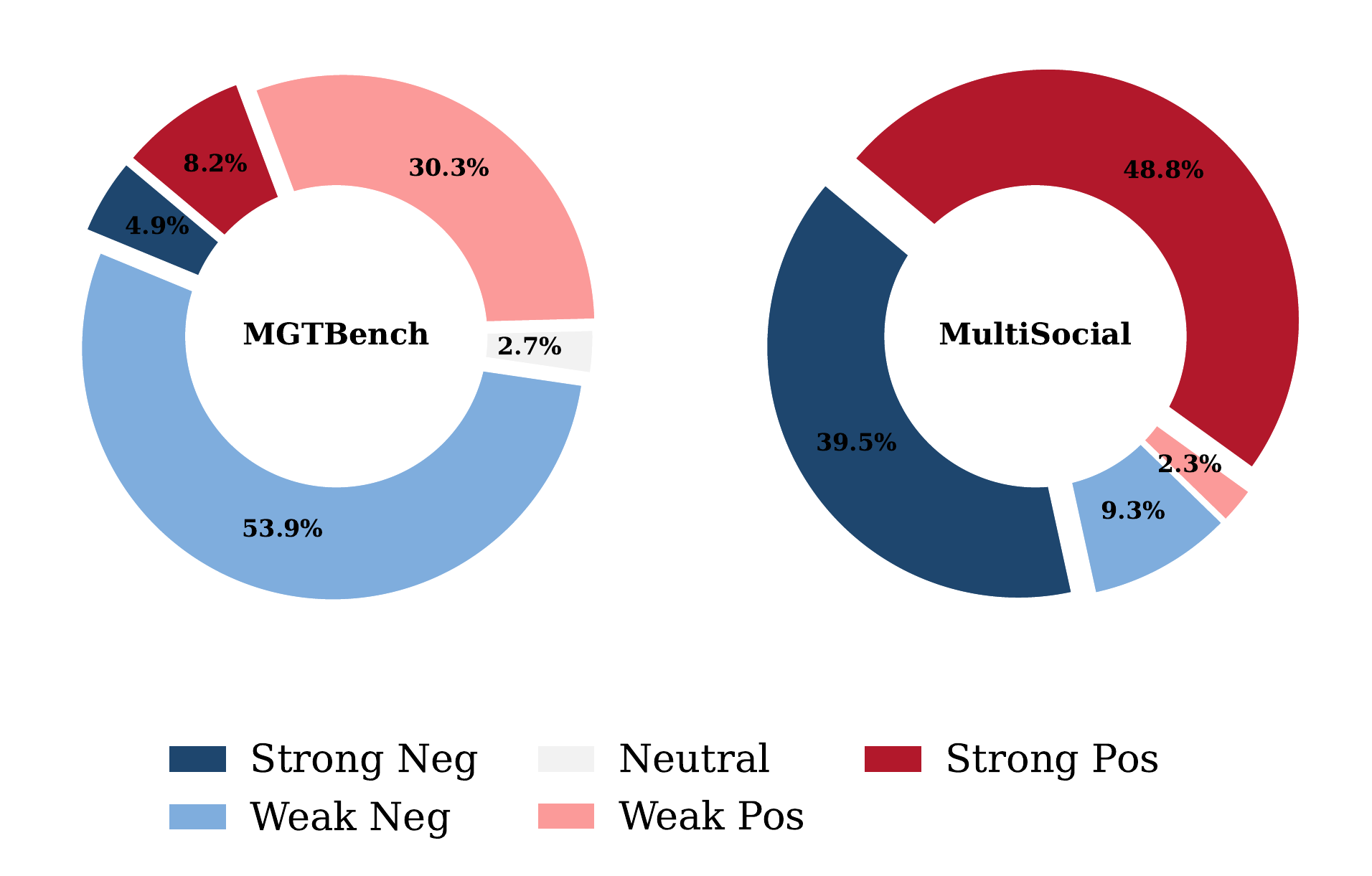}
\caption{Attribution distribution of token contributions on MultiSocial and MGTBench, calculated via gradient-weighted embedding attribution. The left shows sparsely distributed decision influence; the right shows concentrated high-contribution tokens.}
\label{fig:short_and_long}
\end{figure}

\noindent \textbf{Token-Level Detection Contribution.} We hypothesize that the effectiveness of perturbation-based attacks correlates with token contribution density. For validation, we analyzed 100 randomly sampled instances from MultiSocial \citep{macko2025multisocial} and MGTBench. As illustrated in Figure~\ref{fig:short_and_long}, when high-contribution tokens are concentrated, perturbation-based attacks achieve a high hit rate, easily flipping classifications through local edits. However, when the decision influence is sparsely distributed across a vast ``neutral zone,'' these attacks predominantly target non-critical regions, failing to induce the substantive shift in global feature distribution necessary to evade robust detectors.

\begin{table}[t]
\centering
\small

\resizebox{\columnwidth}{!}{
\begin{tabular}{lcccc}
\toprule
\textbf{Dataset} & \textbf{Precision} & \textbf{Recall} & \textbf{F1 Score} & \textbf{Accuracy} \\
\midrule
\multicolumn{5}{c}{\textit{Target Detector: Writer}} \\
\midrule
Essay & 1.0000 & 0.9800 & 0.9899 & 0.9900 \\
Reuters & 1.0000 & 0.8571 & 0.9231 & 0.9293 \\
WP & 0.9800 & 0.9800 & 0.9800 & 0.9800 \\
Humanities & 0.9091 & 0.6122 & 0.7317 & 0.7778 \\
Social & 0.9697 & 0.6400 & 0.7711 & 0.8100 \\
STEM & 1.0000 & 0.5800 & 0.7342 & 0.7900 \\
\midrule
\multicolumn{5}{c}{\textit{Target Detector: Scribbr}} \\
\midrule
Essay & 1.0000 & 0.5400 & 0.7013 & 0.7700 \\
Reuters & 1.0000 & 0.3600 & 0.5294 & 0.6800 \\
WP & 1.0000 & 0.5000 & 0.6667 & 0.7500 \\
Humanities & 1.0000 & 0.3400 & 0.5075 & 0.6700 \\
Social & 1.0000 & 0.4200 & 0.5915 & 0.7100 \\
STEM & 1.0000 & 0.4800 & 0.6486 & 0.7400 \\
\bottomrule
\end{tabular}
}

\caption{Performance metrics of commercial detectors.}
\label{tab:commercial_performance}
\end{table}


\noindent \textbf{Commercial Detector Baselines.} To establish a rigorous benchmark, we assessed the detection capabilities of Writer and Scribbr across all six domains. As shown in Table~\ref{tab:commercial_performance}, we report four standard classification metrics: Precision (the proportion of detected AI texts that are truly AI-generated), Recall (the proportion of AI-generated texts correctly identified), F1 Score (the harmonic mean of Precision and Recall), and Accuracy (the proportion of correctly classified samples). Both systems exhibit high precision, with Writer demonstrating superior recall and achieving 0.99 accuracy on the Essay dataset. These results confirm their validity as robust baselines for evaluating evasion attacks. The total query cost for all commercial detector experiments was under \$20.

\begin{figure}[h]
\centering
\includegraphics[width=\columnwidth]{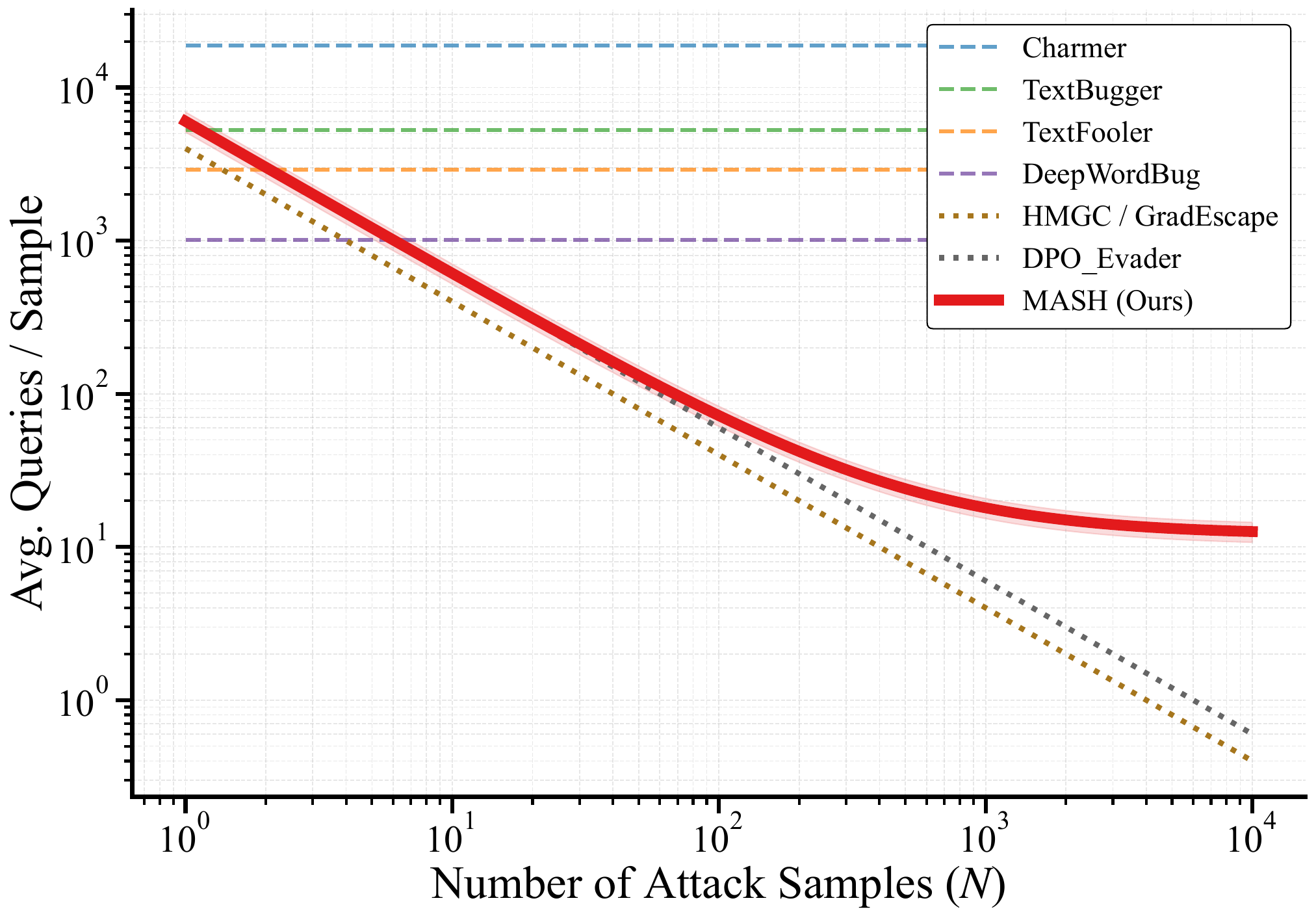}
\caption{Amortized query cost per sample. MASH's initial training overhead is rapidly diluted as attack volume grows, achieving significant efficiency gains over perturbation-based methods at scale.}
\label{fig:query_efficiency}
\end{figure}

\noindent \textbf{Amortized Query Cost Analysis.} 
Figure~\ref{fig:query_efficiency} shows the average query cost per sample as a function of the total attack volume ($N$). Perturbation-based methods (e.g., Charmer, TextFooler) require iterative optimization for each input, resulting in a constant and high per-sample query cost regardless of scale. In contrast, MASH incurs a fixed upfront cost for training the paraphraser (Stages 2 and 3), but requires no additional queries during inference. As $N$ increases, this training cost is distributed across more samples, leading to a sharp decrease in per-sample overhead. For large-scale attacks ($N > 10^4$), the average cost of MASH becomes negligible, achieving orders-of-magnitude better efficiency than query-based methods. This confirms that MASH is highly cost-effective for high-volume evasion scenarios.

\section{Defensive Analysis}
\label{sec:appendix_defense}

\begin{table}[t]
\centering
\small

\resizebox{\columnwidth}{!}{
\begin{tabular}{lcccc}
\toprule
\multirow{2}{*}{\textbf{Dataset}} & \multicolumn{2}{c}{\textbf{w/o Training}} & \multicolumn{2}{c}{\textbf{w Training}} \\
\cmidrule(lr){2-3} \cmidrule(lr){4-5}
 & \textbf{Accuracy} & \textbf{ASR} & \textbf{Accuracy} & \textbf{ASR} \\
\midrule
Essay & 0.9974 & 0.95 & 0.3059 & 0.01 \\
Reuters & 0.9988 & 0.73 & 0.9988 & 0 \\
WP & 0.9912 & 0.90 & 0.8731 & 0.01 \\
Humanity & 0.9943 & 0.87 & 0.9543 & 0.12 \\
Social Science & 0.9898 & 0.98 & 0.838 & 0 \\
STEM & 0.9963 & 1 & 0.878 & 0 \\
\bottomrule
\end{tabular}
}
\caption{Performance comparison on different datasets with and without adversarial training.}
\label{tab:adversarial_training_results}
\end{table}

While MASH effectively evades existing detectors, we further investigate potential mitigation strategies against such style-based humanization attacks. 
We propose an adversarial training approach in which the target detector is fine-tuned on a mixture of original human texts and MASH-polished adversarial examples.
We use RoBERTa as the base detector and construct a balanced training set of 1,000 human-written texts and 1,000 MASH-rewritten machine-generated texts.

The results in Table~\ref{tab:adversarial_training_results} reveal a clear robustness-accuracy trade-off.
Adversarial training substantially improves the detector's robustness against MASH: the ASR drops to nearly 0.00 across most domains (e.g., STEM, Reuters), demonstrating that the detector can learn to recognize humanized styles.
However, this defense improves detection of MASH-generated texts at the cost of reduced accuracy on unattacked samples.

\section{Visualization of Rewritten Texts}
\label{sec:vis_rewritten}

Figure~\ref{fig:tsne} shows the t-SNE projection of the feature 
representations from the RoBERTa detector. The plots reveal a 
clear distribution shift: original AI texts (red) are well 
separated from human texts (blue), whereas MASH-polished texts 
(green) successfully migrate into the human cluster, validating 
the effectiveness of our style transfer.

\begin{figure*}[t]
  \centering  
  \includegraphics[width=1\linewidth]{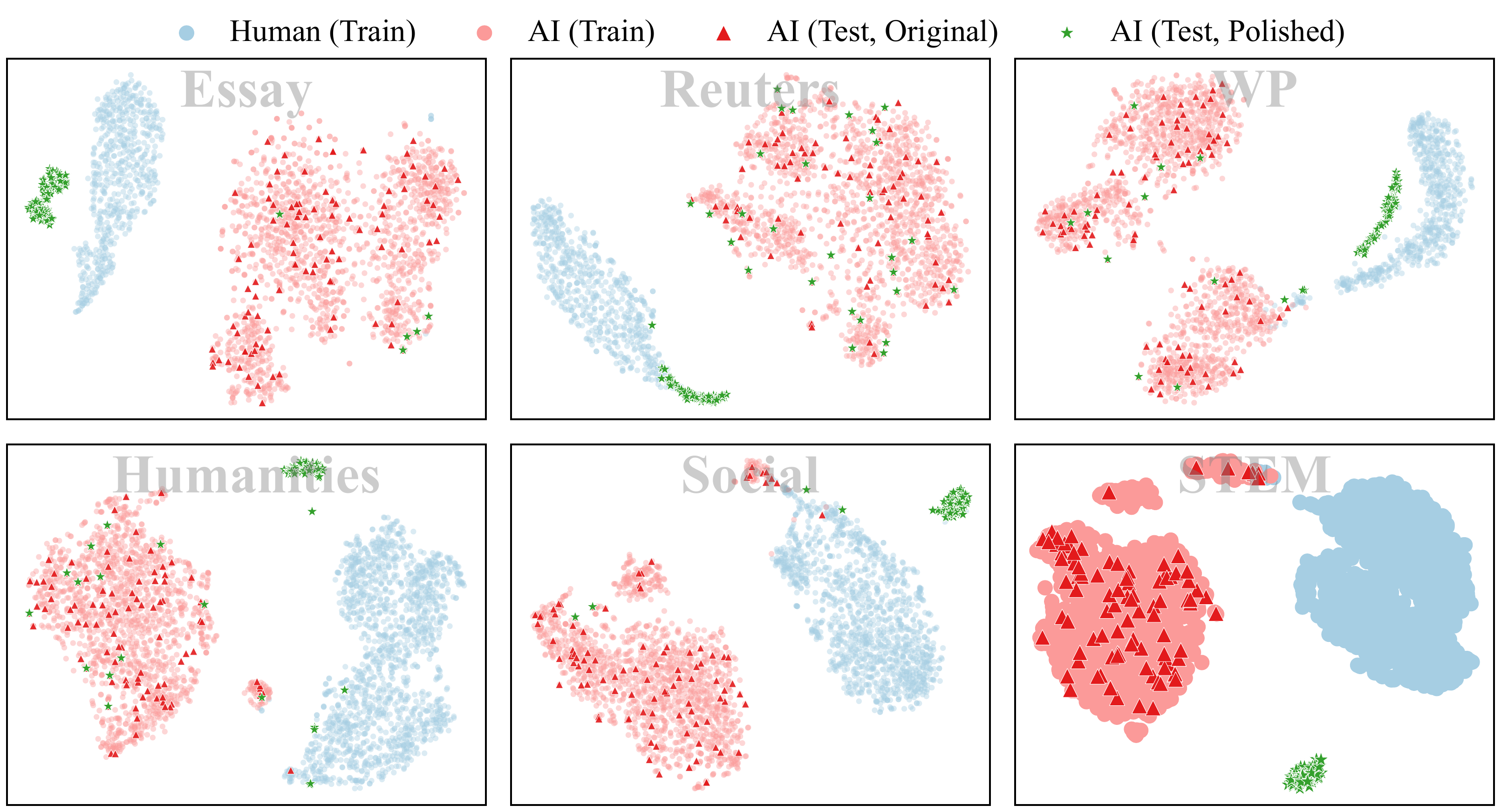} 
  \caption{t-SNE visualization of feature representations extracted from the RoBERTa detector across six domains.}
  \label{fig:tsne} 
\end{figure*}

\section{Additional Ablation Studies}
\label{sec:add_abla}

\begin{table}[t]
  \centering

  \small
  \begin{tabular}{llccc}
    \toprule
    \textbf{Dataset} & \textbf{Model} & \textbf{PPL}$\downarrow$ & \textbf{GRUEN}$\uparrow$ & \textbf{BS}$\uparrow$ \\
    \midrule
    \multirow{4}{*}{Essay} 
      & Original   & 25.10 & 0.371 & 0.830 \\
      & Llama-3-8B & 21.01 & 0.362 & 0.856 \\
      & Qwen2.5-7B & 20.70 & 0.587 & 0.874 \\
      & \cellcolor{gray!20}\textbf{GPT-5 (Ours)} & \cellcolor{gray!20}\textbf{18.90} & \cellcolor{gray!20}\textbf{0.661} & \cellcolor{gray!20}\textbf{0.900} \\
    \midrule
    \multirow{4}{*}{Reuters} 
      & Original   & \textbf{8.42} & 0.678 & \textbf{0.904} \\
      & Llama-3-8B & 9.09 & 0.679 & 0.902 \\
      & Qwen2.5-7B & 9.09 & 0.679 & 0.902 \\
      & \cellcolor{gray!20}\textbf{GPT-5 (Ours)} & \cellcolor{gray!20}9.09 & \cellcolor{gray!20}\textbf{0.679} & \cellcolor{gray!20}0.902 \\
    \midrule
    \multirow{4}{*}{WP} 
      & Original   & 24.68 & 0.475 & 0.822 \\
      & Llama-3-8B & \textbf{19.67} & 0.493 & 0.831 \\
      & Qwen2.5-7B & 19.74 & 0.585 & 0.838 \\
      & \cellcolor{gray!20}\textbf{GPT-5 (Ours)} & \cellcolor{gray!20}20.80 & \cellcolor{gray!20}\textbf{0.647} & \cellcolor{gray!20}\textbf{0.897} \\
    \midrule
    \multirow{4}{*}{Humanity} 
      & Original   & 30.81 & 0.365 & 0.819 \\
      & Llama-3-8B & \textbf{19.81} & 0.564 & 0.851 \\
      & Qwen2.5-7B & 21.46 & 0.631 & 0.862 \\
      & \cellcolor{gray!20}\textbf{GPT-5 (Ours)} & \cellcolor{gray!20}20.67 & \cellcolor{gray!20}\textbf{0.640} & \cellcolor{gray!20}\textbf{0.905} \\
    \midrule
    \multirow{4}{*}{Social} 
      & Original   & 22.55 & 0.457 & 0.800 \\
      & Llama-3-8B & \textbf{15.80} & 0.575 & 0.843 \\
      & Qwen2.5-7B & 18.63 & 0.699 & 0.854 \\
      & \cellcolor{gray!20}\textbf{GPT-5 (Ours)} & \cellcolor{gray!20}17.54 & \cellcolor{gray!20}\textbf{0.729} & \cellcolor{gray!20}\textbf{0.899} \\
    \midrule
    \multirow{4}{*}{STEM} 
      & Original   & 22.89 & 0.463 & 0.783 \\
      & Llama-3-8B & \textbf{18.67} & 0.500 & 0.815 \\
      & Qwen2.5-7B & 21.82 & 0.606 & 0.810 \\
      & \cellcolor{gray!20}\textbf{GPT-5 (Ours)} & \cellcolor{gray!20}20.08 & \cellcolor{gray!20}\textbf{0.643} & \cellcolor{gray!20}\textbf{0.819} \\
    \bottomrule
  \end{tabular}

  \caption{Ablation study on Stage 4 refinement models.}

    \label{tab:roberta_quality_comparison}
\end{table}

\noindent \textbf{Alternative Refinement Models.}
Table~\ref{tab:roberta_quality_comparison} compares text quality across 
different refinement models to evaluate open-source alternatives for 
Stage 4. We assess Llama-3-8B-Instruct and Qwen2.5-7B-Instruct against 
our default GPT-5 refinement. Results show that both open-source models 
improve over original outputs, with Qwen2.5-7B achieving competitive 
GRUEN scores. However, GPT-5 consistently yields superior BERTScore, 
indicating better semantic preservation. These findings suggest that 
open-source refiners offer viable alternatives under cost constraints, 
though with a modest trade-off in semantic fidelity.

\begin{table*}[t]
  \centering

  \resizebox{\textwidth}{!}{%
  \begin{tabular}{l|l|cccc|cccc|cccc}
    \toprule
    \multirow{2}{*}{\textbf{Detector}} & \multirow{2}{*}{\textbf{Method Variant}} & \multicolumn{4}{c|}{\textbf{MGTBench-Essay}} & \multicolumn{4}{c|}{\textbf{MGTBench-Reuters}} & \multicolumn{4}{c}{\textbf{MGTBench-WP}} \\
    \cmidrule(lr){3-6} \cmidrule(lr){7-10} \cmidrule(lr){11-14}
     & & \textbf{ASR} & \textbf{PPL} & \textbf{GRUEN} & \textbf{BS} & \textbf{ASR} & \textbf{PPL} & \textbf{GRUEN} & \textbf{BS} & \textbf{ASR} & \textbf{PPL} & \textbf{GRUEN} & \textbf{BS} \\
    \midrule
    \multirow{3}{*}{RoBERTa} 
     & Style-SFT & 0.83 & 25.10 & 0.5033 & 0.8823 & 0.09 & 8.58 & \textbf{0.6968} & \textbf{0.9930} & 0.68 & 19.79 & 0.5307 & 0.8781 \\
     & + DPO & \textbf{0.95} & 25.10 & 0.3711 & 0.8301 & \textbf{0.73} & 8.42 & 0.6779 & 0.9044 & \textbf{0.90} & 24.68 & 0.4748 & 0.8222 \\
     \rowcolor{gray!20} & + Refine (Ours) & \textbf{0.95} & \textbf{18.90} & \textbf{0.6614} & 0.9004 & \textbf{0.73} & \textbf{9.09} & 0.6790 & 0.9015 & \textbf{0.90} & \textbf{20.80} & \textbf{0.6471} & \textbf{0.8974} \\
    \midrule
    \multirow{3}{*}{SCRN} 
     & Style-SFT & 0.33 & 22.40 & 0.4615 & 0.8881 & 0.62 & 26.12 & 0.4912 & 0.8682 & 0.39 & 19.80 & 0.5727 & 0.9384 \\
     & + DPO & 0.56 & 23.11 & 0.4611 & 0.8679 & \textbf{0.83} & 26.26 & 0.4590 & 0.8496 & \textbf{0.81} & 20.34 & 0.5730 & 0.9258 \\
     \rowcolor{gray!20} & + Refine (Ours) & \textbf{0.56} & \textbf{13.30} & \textbf{0.6984} & \textbf{0.8951} & \textbf{0.83} & \textbf{14.01} & \textbf{0.6951} & \textbf{0.8964} & \textbf{0.81} & \textbf{12.70} & \textbf{0.6811} & \textbf{0.9433} \\
    \midrule
    \multirow{3}{*}{Binoculars} 
     & Style-SFT & 0.88 & 13.73 & 0.5143 & 0.8682 & 0.92 & 13.90 & 0.5293 & 0.8472 & 0.82 & 13.69 & 0.5889 & 0.8982 \\
     & + DPO & \textbf{0.94} & 13.24 & 0.4914 & 0.8540 & \textbf{0.95} & 13.32 & 0.4432 & 0.8335 & \textbf{0.85} & 13.58 & 0.5477 & 0.8809 \\
     \rowcolor{gray!20} & + Refine (Ours) & \textbf{0.94} & \textbf{11.39} & \textbf{0.6755} & \textbf{0.9077} & \textbf{0.95} & \textbf{12.91} & \textbf{0.7122} & \textbf{0.9037} & \textbf{0.85} & \textbf{12.05} & \textbf{0.6883} & \textbf{0.9286} \\
    \midrule
    \midrule
    \multirow{2}{*}{\textbf{Detector}} & \multirow{2}{*}{\textbf{Method Variant}} & \multicolumn{4}{c|}{\textbf{MGT-Academic-Humanity}} & \multicolumn{4}{c|}{\textbf{MGT-Academic-Social Science}} & \multicolumn{4}{c}{\textbf{MGT-Academic-STEM}} \\
    \cmidrule(lr){3-6} \cmidrule(lr){7-10} \cmidrule(lr){11-14}
     & & \textbf{ASR} & \textbf{PPL} & \textbf{GRUEN} & \textbf{BS} & \textbf{ASR} & \textbf{PPL} & \textbf{GRUEN} & \textbf{BS} & \textbf{ASR} & \textbf{PPL} & \textbf{GRUEN} & \textbf{BS} \\
    \midrule
    \multirow{3}{*}{RoBERTa} 
     & Style-SFT & 0.68 & 26.61 & 0.4366 & 0.8685 & 0.92 & 23.03 & 0.5738 & 0.8204 & \textbf{1.00} & 23.38 & 0.5221 & 0.7868 \\
     & + DPO & \textbf{0.87} & 30.81 & 0.3651 & 0.8193 & \textbf{0.98} & 22.55 & 0.4568 & 0.8004 & \textbf{1.00} & 22.89 & 0.4626 & 0.7833 \\
     \rowcolor{gray!20} & + Refine (Ours) & \textbf{0.87} & \textbf{20.67} & \textbf{0.6396} & \textbf{0.9048} & \textbf{0.98} & \textbf{17.54} & \textbf{0.7294} & \textbf{0.8993} & \textbf{1.00} & \textbf{20.08} & \textbf{0.6431} & \textbf{0.8194} \\
    \midrule
    \multirow{3}{*}{SCRN} 
     & Style-SFT & 0.70 & 31.77 & 0.5092 & 0.8955 & 0.65 & 27.24 & 0.4768 & 0.8816 & 0.75 & 38.57 & 0.4716 & 0.8721 \\
     & + DPO & \textbf{0.77} & 32.18 & 0.5072 & 0.8794 & \textbf{0.78} & 27.62 & 0.4742 & 0.8644 & \textbf{0.89} & 39.86 & 0.4374 & 0.8506 \\
     \rowcolor{gray!20} & + Refine (Ours) & \textbf{0.77} & \textbf{14.85} & \textbf{0.6762} & \textbf{0.9299} & \textbf{0.78} & \textbf{13.35} & \textbf{0.6626} & \textbf{0.9100} & \textbf{0.89} & \textbf{14.71} & \textbf{0.6488} & \textbf{0.9091} \\
    \midrule
    \multirow{3}{*}{Binoculars} 
     & Style-SFT & 0.92 & 15.16 & 0.5248 & 0.8454 & 0.95 & 16.02 & 0.5036 & 0.8480 & 0.94 & 17.60 & 0.4400 & 0.8280 \\
     & + DPO & \textbf{0.95} & 14.01 & 0.4184 & 0.8313 & \textbf{0.95} & 14.63 & 0.4445 & 0.8339 & \textbf{0.99} & 16.63 & 0.3508 & 0.8157 \\
     \rowcolor{gray!20} & + Refine (Ours) & \textbf{0.95} & \textbf{13.47} & \textbf{0.7025} & \textbf{0.9074} & \textbf{0.95} & \textbf{13.31} & \textbf{0.7014} & \textbf{0.9073} & \textbf{0.99} & \textbf{14.02} & \textbf{0.6434} & \textbf{0.8864} \\
    \bottomrule
  \end{tabular}%
  }

    \caption{Ablation study of MASH components across three detectors.}
  \label{tab:ablation_study}
  
\end{table*}

\noindent \textbf{Impact of MASH Components.}
Table~\ref{tab:ablation_study} evaluates the contributions of Style-SFT, 
DPO alignment, and inference-time refinement across three detectors and 
six datasets. The results demonstrate that the DPO module is critical 
for crossing detector decision boundaries, yielding a significant boost 
in ASR; however, this improvement often degrades textual fluency. The 
inference-time refinement module effectively mitigates this trade-off.

\begin{table*}[t]
  \centering

  \resizebox{\textwidth}{!}{%
  \begin{tabular}{l|ccc|ccc|ccc}
    \toprule
    \multirow{2}{*}{\textbf{Method}} & \multicolumn{3}{c|}{\textbf{MGTBench-Essay}} & \multicolumn{3}{c|}{\textbf{MGTBench-Reuters}} & \multicolumn{3}{c}{\textbf{MGTBench-WP}} \\
    \cmidrule(lr){2-4} \cmidrule(lr){5-7} \cmidrule(lr){8-10}
     & \textbf{PPL} $\downarrow$ & \textbf{GRUEN} $\uparrow$ & \textbf{BS} $\uparrow$ & \textbf{PPL} $\downarrow$ & \textbf{GRUEN} $\uparrow$ & \textbf{BS} $\uparrow$ & \textbf{PPL} $\downarrow$ & \textbf{GRUEN} $\uparrow$ & \textbf{BS} $\uparrow$ \\
    \midrule
    DeepWordBug & 63.02 $\to$ 62.57 & 0.470 $\to$ 0.485 & 0.934 $\to$ 0.933 & 88.83 $\to$ 91.32 & 0.524 $\to$ 0.524 & 0.933 $\to$ 0.929 & 32.85 $\to$ 30.41 & 0.524 $\to$ 0.553 & 0.950 $\to$ 0.951 \\
    TextBugger & 40.67 $\to$ 33.32 & 0.457 $\to$ 0.516 & 0.919 $\to$ 0.932 & 84.02 $\to$ 88.84 & 0.404 $\to$ 0.404 & 0.893 $\to$ 0.890 & 34.24 $\to$ 24.15 & 0.494 $\to$ 0.532 & 0.932 $\to$ 0.945 \\
    TextFooler & 77.18 $\to$ 72.24 & 0.447 $\to$ 0.480 & 0.940 $\to$ 0.943 & 181.46 $\to$ 197.37 & 0.302 $\to$ 0.303 & 0.870 $\to$ 0.868 & 50.20 $\to$ 37.01 & 0.425 $\to$ 0.490 & 0.938 $\to$ 0.946 \\
    DIPPER & 44.78 $\to$ 44.76 & 0.648 $\to$ 0.651 & 0.903 $\to$ 0.906 & 11.07 $\to$ 11.07 & 0.718 $\to$ 0.718 & 0.910 $\to$ 0.910 & 14.11 $\to$ 14.09 & 0.577 $\to$ 0.583 & 0.901 $\to$ 0.902 \\
    Toblend & 7.03 $\to$ 7.07 & 0.378 $\to$ 0.378 & 0.753 $\to$ 0.752 & 6.88 $\to$ 6.88 & 0.558 $\to$ 0.558 & 0.724 $\to$ 0.724 & 6.59 $\to$ 6.65 & 0.481 $\to$ 0.487 & 0.731 $\to$ 0.737 \\
    PromptAttack & 11.21 $\to$ 11.99 & 0.778 $\to$ 0.772 & 0.916 $\to$ 0.915 & 10.55 $\to$ 11.41 & 0.791 $\to$ 0.791 & 0.927 $\to$ 0.924 & 10.55 $\to$ 11.27 & 0.726 $\to$ 0.726 & 0.925 $\to$ 0.921 \\
    Charmer & 38.00 $\to$ 37.40 & 0.553 $\to$ 0.555 & 0.966 $\to$ 0.963 & 114.57 $\to$ 115.62 & 0.583 $\to$ 0.584 & 0.962 $\to$ 0.960 & 17.26 $\to$ 17.02 & 0.579 $\to$ 0.566 & 0.976 $\to$ 0.971 \\
    DPO-Evader & 5.14 $\to$ 5.27 & 0.154 $\to$ 0.154 & 0.934 $\to$ 0.930 & 4.53 $\to$ 4.65 & 0.149 $\to$ 0.149 & 0.926 $\to$ 0.923 & 5.41 $\to$ 5.52 & 0.204 $\to$ 0.202 & 0.923 $\to$ 0.918 \\
    HMGC & 57.80 $\to$ 49.17 & 0.588 $\to$ 0.604 & 0.923 $\to$ 0.930 & 82.60 $\to$ 88.63 & 0.547 $\to$ 0.547 & 0.901 $\to$ 0.898 & 41.82 $\to$ 37.87 & 0.576 $\to$ 0.573 & 0.932 $\to$ 0.931 \\
    GradEscape & 17.72 $\to$ 16.76 & 0.135 $\to$ 0.191 & 0.904 $\to$ 0.911 & 74.96 $\to$ 84.20 & 0.481 $\to$ 0.482 & 0.814 $\to$ 0.815 & 13.49 $\to$ 13.59 & 0.652 $\to$ 0.648 & 0.987 $\to$ 0.981 \\
    CoPA & 29.42 $\to$ 30.33 & 0.689 $\to$ 0.691 & 0.875 $\to$ 0.876 & 41.64 $\to$ 41.97 & 0.694 $\to$ 0.694 & 0.877 $\to$ 0.876 & 25.47 $\to$ 23.23 & 0.633 $\to$ 0.649 & 0.881 $\to$ 0.891 \\
    \midrule
    \multirow{2}{*}{\textbf{Method}} & \multicolumn{3}{c|}{\textbf{MGT-Academic-Humanity}} & \multicolumn{3}{c|}{\textbf{MGT-Academic-Social Science}} & \multicolumn{3}{c}{\textbf{MGT-Academic-STEM}} \\
    \cmidrule(lr){2-4} \cmidrule(lr){5-7} \cmidrule(lr){8-10}
     & \textbf{PPL} $\downarrow$ & \textbf{GRUEN} $\uparrow$ & \textbf{BS} $\uparrow$ & \textbf{PPL} $\downarrow$ & \textbf{GRUEN} $\uparrow$ & \textbf{BS} $\uparrow$ & \textbf{PPL} $\downarrow$ & \textbf{GRUEN} $\uparrow$ & \textbf{BS} $\uparrow$ \\
    \midrule
    DeepWordBug & 30.09 $\to$ 24.03 & 0.487 $\to$ 0.517 & 0.952 $\to$ 0.959 & 32.59 $\to$ 22.18 & 0.541 $\to$ 0.600 & 0.951 $\to$ 0.963 & 22.64 $\to$ 19.88 & 0.466 $\to$ 0.489 & 0.954 $\to$ 0.958 \\
    TextBugger & 40.51 $\to$ 22.91 & 0.403 $\to$ 0.507 & 0.932 $\to$ 0.958 & 56.87 $\to$ 26.43 & 0.443 $\to$ 0.570 & 0.936 $\to$ 0.959 & 39.93 $\to$ 27.79 & 0.404 $\to$ 0.479 & 0.933 $\to$ 0.946 \\
    TextFooler & 72.33 $\to$ 31.04 & 0.350 $\to$ 0.494 & 0.929 $\to$ 0.955 & 136.89 $\to$ 47.22 & 0.332 $\to$ 0.525 & 0.912 $\to$ 0.952 & 81.24 $\to$ 45.71 & 0.319 $\to$ 0.448 & 0.926 $\to$ 0.946 \\
    DIPPER & 14.67 $\to$ 14.38 & 0.578 $\to$ 0.595 & 0.903 $\to$ 0.925 & 13.88 $\to$ 13.31 & 0.683 $\to$ 0.673 & 0.906 $\to$ 0.926 & 12.47 $\to$ 11.66 & 0.604 $\to$ 0.646 & 0.879 $\to$ 0.901 \\
    PromptAttack & 12.46 $\to$ 13.17 & 0.689 $\to$ 0.688 & 0.928 $\to$ 0.930 & 11.61 $\to$ 12.26 & 0.690 $\to$ 0.680 & 0.930 $\to$ 0.941 & 10.39 $\to$ 10.51 & 0.653 $\to$ 0.648 & 0.934 $\to$ 0.941 \\
    Charmer & 14.54 $\to$ 13.65 & 0.550 $\to$ 0.545 & 0.971 $\to$ 0.967 & 12.88 $\to$ 11.83 & 0.641 $\to$ 0.599 & 0.974 $\to$ 0.966 & 11.62 $\to$ 11.57 & 0.544 $\to$ 0.530 & 0.987 $\to$ 0.983 \\
    DPO-Evader & 9.82 $\to$ 9.91 & 0.441 $\to$ 0.441 & 0.921 $\to$ 0.921 & 6.80 $\to$ 7.12 & 0.353 $\to$ 0.366 & 0.927 $\to$ 0.928 & 8.37 $\to$ 8.97 & 0.540 $\to$ 0.534 & 0.933 $\to$ 0.934 \\
    HMGC & 48.02 $\to$ 37.54 & 0.518 $\to$ 0.556 & 0.916 $\to$ 0.930 & 58.85 $\to$ 28.65 & 0.571 $\to$ 0.633 & 0.917 $\to$ 0.950 & 42.69 $\to$ 33.06 & 0.528 $\to$ 0.557 & 0.926 $\to$ 0.937 \\
    GradEscape & 21.51 $\to$ 17.82 & 0.615 $\to$ 0.635 & 0.958 $\to$ 0.963 & 14.69 $\to$ 12.31 & 0.702 $\to$ 0.676 & 0.967 $\to$ 0.970 & 20.40 $\to$ 15.98 & 0.616 $\to$ 0.620 & 0.950 $\to$ 0.958 \\
    CoPA & 27.52 $\to$ 21.76 & 0.639 $\to$ 0.667 & 0.873 $\to$ 0.904 & 29.95 $\to$ 17.80 & 0.635 $\to$ 0.696 & 0.874 $\to$ 0.923 & 23.75 $\to$ 18.40 & 0.645 $\to$ 0.681 & 0.870 $\to$ 0.901 \\
    \bottomrule
  \end{tabular}%
  }
    \caption{Performance comparison of baselines before and after applying the GPT-5 refinement model (Stage 4 of MASH). Metrics shown are \textbf{PPL} $\downarrow$ (lower is better), \textbf{GRUEN} $\uparrow$ (higher is better), and \textbf{BS} (BertScore) $\uparrow$ (higher is better). Format: \textit{Original} $\rightarrow$ \textit{Refined}.}
  \label{tab:refinement_comparison}
  
\end{table*}

\noindent \textbf{Generalizability of Refinement Module.} Table~\ref{tab:refinement_comparison} evaluates whether the inference-time refinement module (Stage 4) generalizes beyond MASH to other baseline attack methods. We apply the refinement module to outputs generated by ten baseline attacks across six datasets. The results demonstrate consistent improvements in text quality: PPL decreases substantially (e.g., TextFooler on Social: 136.89 → 47.22), and GRUEN scores improve across most methods. BERTScore remains stable or improves slightly, indicating that semantic consistency is preserved during refinement.

\section{Extended Analysis on Transferability}
\label{app:data_analysis}

This section analyzes the transferability anomalies observed in 
Section~\ref{sec:uni_and_trans}. While MASH generalizes effectively 
across creative and argumentative domains (e.g., Essay), it fails to 
transfer to Reuters. We attribute this to stylistic mismatch: the 
rhetorical diversity learned by MASH conflicts with the rigid, objective, 
and concise standards of news reportage, rendering adversarial samples 
easily detectable. This mismatch also explains why generic corpora 
(dmitva, SemEval) failed to bypass Reuters-trained detectors, as they 
likely misclassify non-journalistic styles as AI-generated; notably, 
the domain-specific cc\_news proved effective.

Regarding detector transferability (Figure~\ref{fig:detector_transfer}), 
attacks transfer from supervised to zero-shot methods, but not vice versa. 
Supervised detectors rely on high-dimensional semantic features, compelling 
MASH to perform deep stylistic alignment. Since zero-shot detectors utilize 
simpler statistical proxies inherent to human writing, samples satisfying 
the rigorous semantic constraints of RoBERTa naturally meet the statistical 
thresholds of Binoculars. Conversely, optimizing solely against zero-shot 
detectors often results in metric overfitting—merely reducing perplexity 
to cross numerical thresholds without achieving the stylistic coherence 
required to deceive supervised classifiers.

\section{Real Cases}
\label{sec:appendix_examples}

We randomly sample examples from our experiments, showing the original 
AI-generated texts alongside their MASH-polished versions to 
illustrate the stylistic transformation.

\begin{figure*}[t]
  \centering  
  \includegraphics[width=1\linewidth]{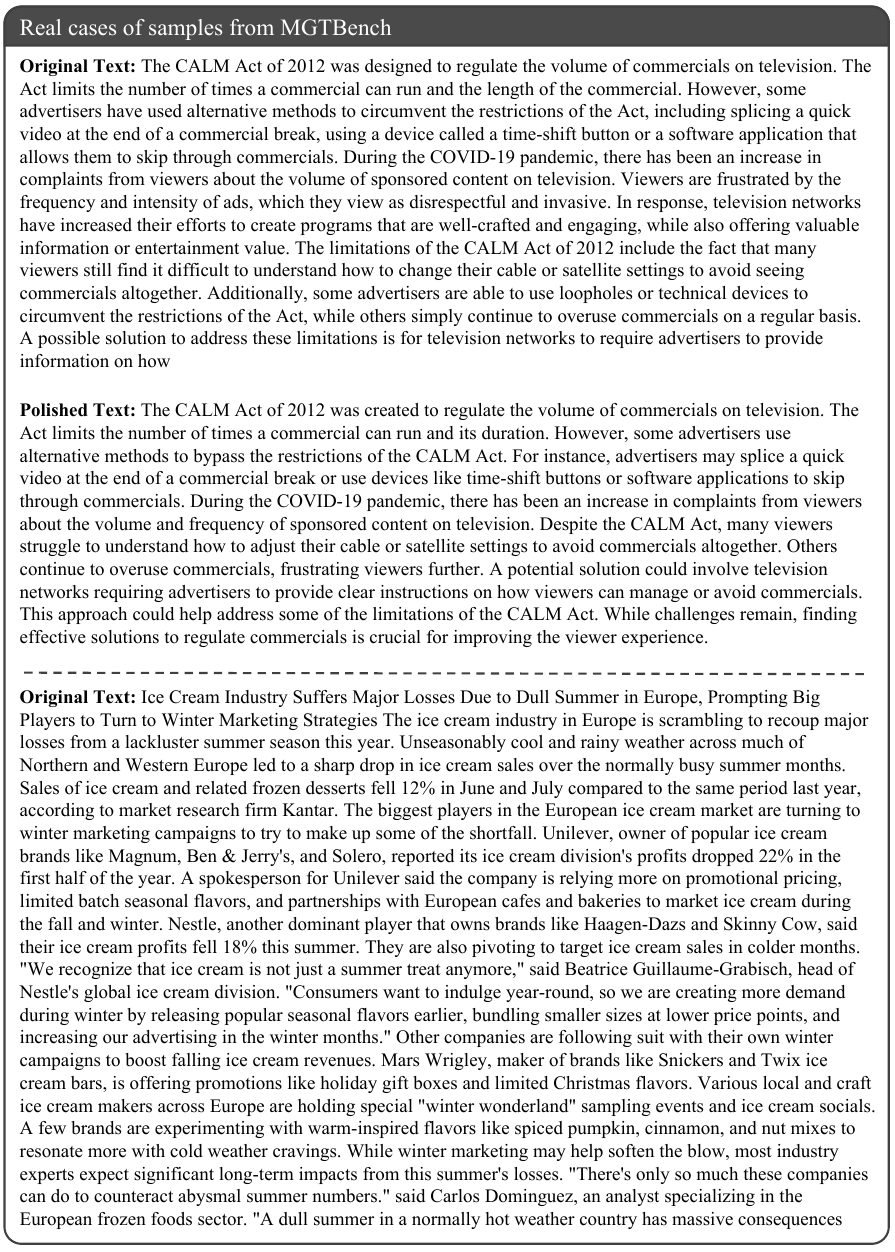} 
  \label{fig:real_case1} 
\end{figure*}

\begin{figure*}[t]
  \centering  
  \includegraphics[width=1\linewidth]{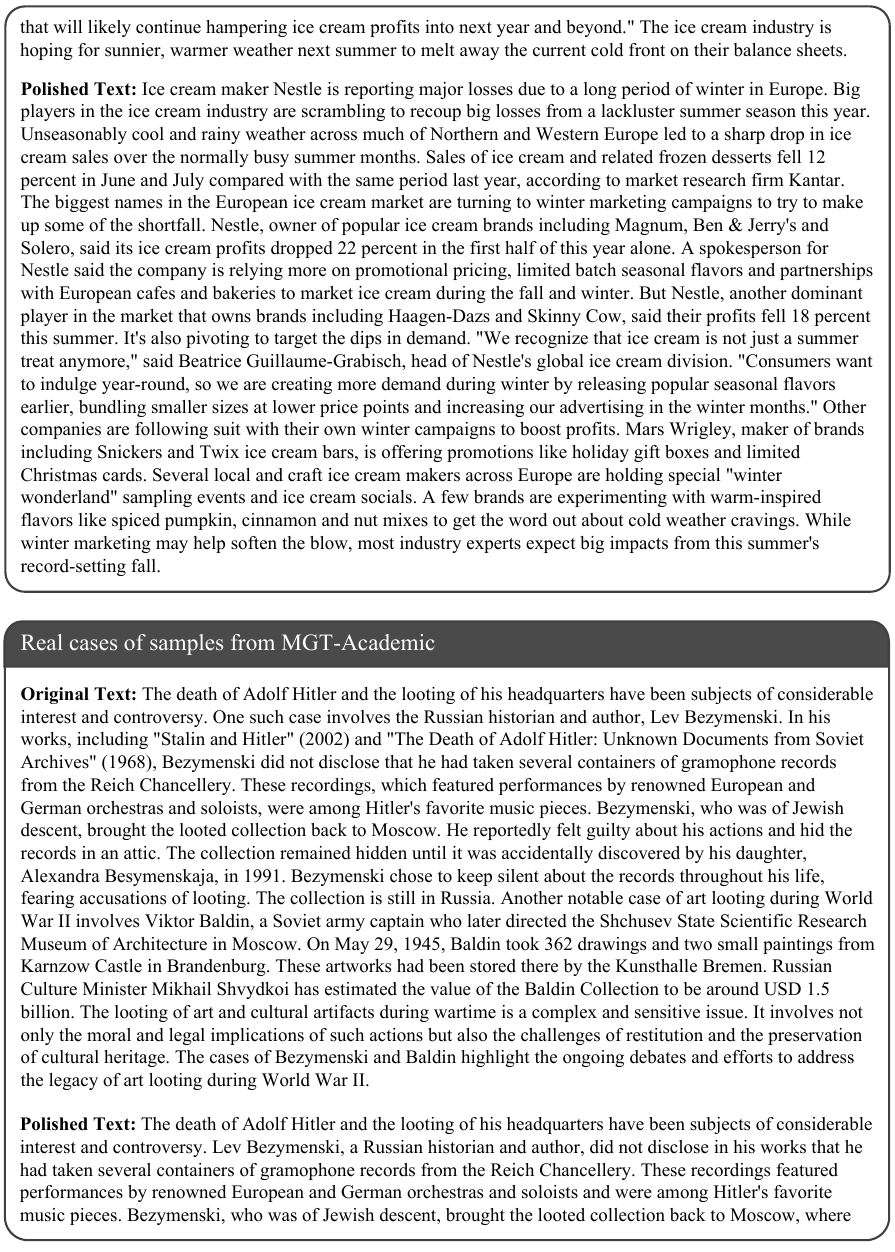} 
  \label{fig:real_case2} 
\end{figure*}

\begin{figure*}[t]
  \centering  
  \includegraphics[width=1\linewidth]{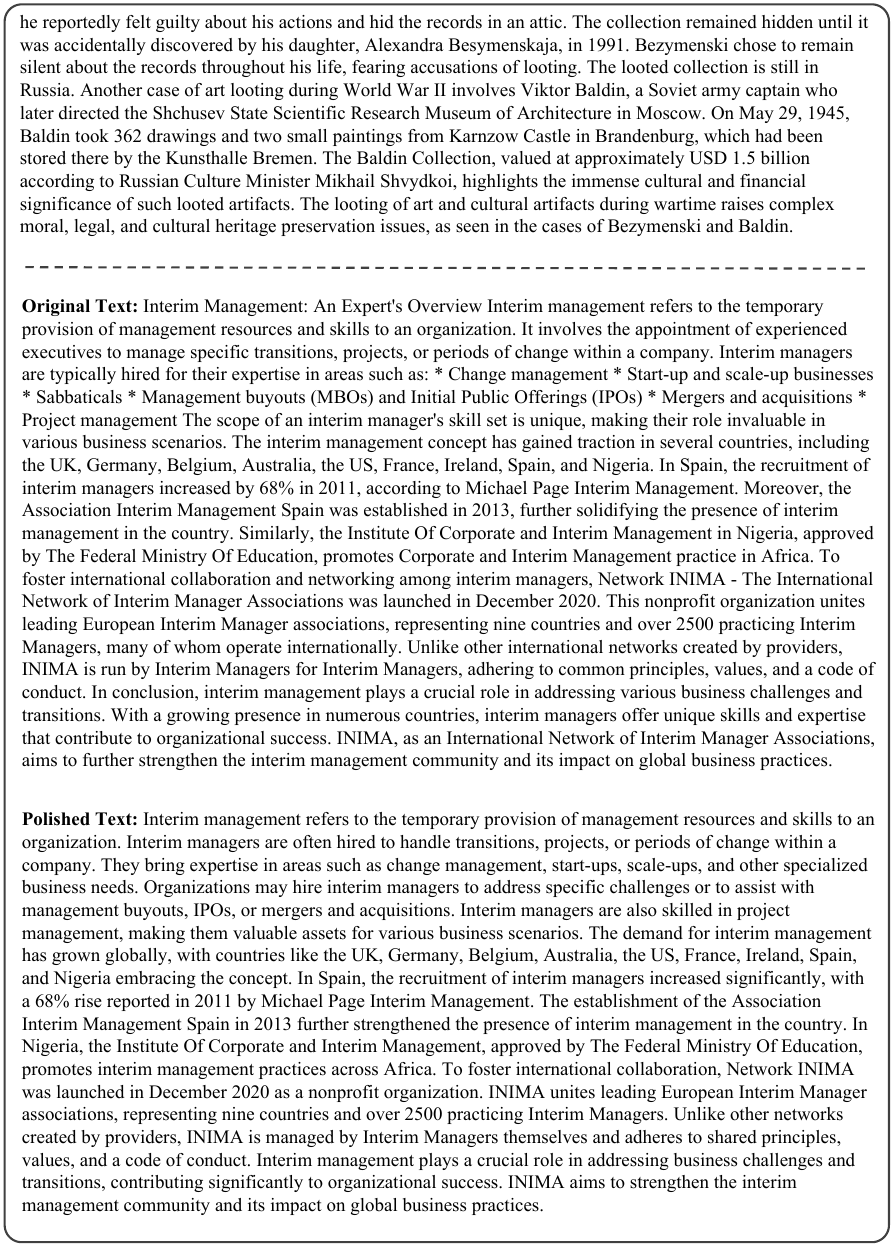} 
  \label{fig:real_case3} 
\end{figure*}

\end{document}